\def\argmax{\mathop{\rm arg\,max}}
\newtheorem{theorem}{Theorem}
\newtheorem{corollary}{Corollary}
\newtheorem{defn}{Definition}
\newtheorem{lemma}{Lemma}
\newtheorem{proposition}{Proposition}
\newtheorem{remark}{Remark}
\newtheorem{assumption}{\textbf{AS}}
\def\argmax{\mathop{\rm arg\,max}}
\newcommand{\vectornorm}[1]{\left|\left|#1\right|\right|}
\begin{document}
\title{Coverage in Multi-Antenna Two-Tier Networks}
\author{\authorblockN{Vikram Chandrasekhar, Marios Kountouris and Jeffrey G. Andrews} \\
\thanks{This research has been supported by Texas Instruments Inc.
V. Chandrasekhar, M. Kountouris and J. G. Andrews are with the Wireless Networking
and Communications Group, Dept. of Electrical and Computer
Engineering at the University of Texas at Austin.
(email: \texttt{cvikram@mail.utexas.edu, (mkountouris, jandrews)@ece.utexas.edu}), Date: \today.}}

\maketitle

\begin{abstract}
In two-tier networks -- comprising a conventional cellular network overlaid with shorter range hotspots (e.g. femtocells, distributed antennas, or wired relays) -- with universal frequency reuse, the near-far effect from cross-tier interference creates dead spots where reliable coverage cannot be guaranteed to users in either tier. Equipping the macrocell and femtocells with multiple antennas enhances robustness against the near-far problem. This work derives the maximum number of simultaneously transmitting multiple antenna femtocells meeting a per-tier outage probability constraint. Coverage dead zones are presented wherein cross-tier interference bottlenecks cellular and femtocell coverage. Two operating regimes are shown namely 1) a cellular-limited regime in which femtocell users experience unacceptable cross-tier interference and 2) a hotspot-limited regime wherein both femtocell users and cellular users are limited by hotspot interference. Our analysis accounts for the per-tier transmit powers, the number of transmit antennas (single antenna transmission being a special case) and terrestrial propagation such as the Rayleigh fading and the path loss exponents.
Single-user (SU) multiple antenna transmission at each tier is shown to provide significantly superior coverage and spatial reuse relative to multiuser (MU) transmission. We propose a decentralized carrier-sensing approach to regulate femtocell transmission powers based on their location. Considering a worst-case cell-edge location, simulations using typical path loss scenarios show that our interference management strategy provides reliable cellular coverage with about $60$ femtocells per cellsite.

\end{abstract}


\section{Introduction}
Wireless operators are in the process of augmenting the macrocell network with supplemental infrastructure such as microcells, distributed antennas and relays. An alternative with lower upfront costs is to improve indoor coverage and capacity using the concept of \emph{end-consumer} installed femtocells or home base stations\cite{ChandrasekharMag2008}. A femtocell is a low power, short range ($10-50$ meters) wireless data access point (AP), functioning in service provider owned licensed spectrum, which provides in-building coverage to home users and transports the user traffic over internet based backhaul such as cable modem. Because of the proximity of users to their APs, femtocells provide higher spatial reuse of spectrum and cause less interference to other users. The spatial reuse (in $\mathrm{b/s/Hz/m^2}$) is readily expressible by the area spectral efficiency (ASE) \cite{Alouini1999} which is a measure of the total obtainable network throughput per unit Hz per unit area. Previous studies\cite{ChandrasekharMag2008} have shown a nearly $25$x improvement in overall spatial reuse when moving from a macrocell-only network to a two-tier underlay with $50$ femtocells per cellsite.

In addition to improved spatial reuse, cellular operators desire to operate both cellular and indoor femtocell users in the same bandwidth (termed \emph{universal frequency reuse}), as is assumed in this paper, for cost effectiveness and flexible deployment \cite{Ho2007}. With shared spectrum, practical challenges stem from the absence of coordination across tiers \cite{ChandrasekharMag2008,ZemlianovInfocomm2005} due to scalability issues. Because femtocells are consumer deployed in their self-interest and because of reasons of security and limited backhaul capacity, these femtocells will potentially offer privileged coverage only to licensed, subscribed indoor users. This paper assumes \emph{Closed Access} (CA), which means only licensed home users within radio range can communicate with their own femtocell. The drawback of such a co-channel closed access deployment of femtocells is that cross-tier interference becomes the capacity-limiting factor. For example, a cellular user located at the edge of their macrocell may experience unacceptable interference from an actively transmitting femtocell in vicinity. Commercial femtocell offerings (such as Verizon's ``home network expander'') provide both public access and closed access operation which are user configurable. Our results provide a networkwide performance benchmark in a closed access setting.

\subsection{Problem Definition}
The motivation behind this paper is to understand how exploiting the available degrees of freedom through multiple antenna transmission influences coverage and spatial reuse in a two-tier network with universal frequency reuse. We consider both single-user (SU) multiple antenna transmission and multiuser (MU) multiple antenna transmission (Fig. \ref{fig:TwoTierMUTransmission}) employed by the macrocell basestation (BS) and femtocell APs. Array gain resulting from SU transmission provides robustness against cross-tier interference. Multiuser transmission, on the other hand, increases the number of simultaneous transmissions at the expense of reduced signal strength per user terminal and potential inter-user interference.

Given a multiple antenna transmission strategy (SU or MU), let $\lambda_{f}$ denote the maximum density (in femtocells per square meter) of simultaneously transmitting femtocells -- denoted as the \emph{maximum femtocell contention density} -- that guarantees a certain minimum per-tier Quality of Service (QoS) requirement. Given a certain minimum per-tier target Signal-to-Interference Ratio (SIR) equaling $\Gamma$, the QoS requirement stipulates that the instantaneous SIR at each user should exceed $\Gamma$ with a probability of at least $1-\epsilon$, where $\epsilon$ is a design parameter. Since the signal power for a cellular user decays as $D^{-\alpha_c}$ ($D$ being the distance from the macrocell BS and $\alpha_c$ is the outdoor path loss exponent), satisfying its QoS requirement requires $\lambda_f$ to be a monotonically decreasing function of $D$. Conversely, satisfying the QoS requirement for a femtocell user at $D$ necessitates $\lambda_f$ to be monotonically decreasing as $D \rightarrow 0$.

This paper characterizes near-far effects and the resulting per-tier coverage by defining two quantities of interest namely the \emph{No-Coverage Femtocell Radius} and the \emph{Cellular Coverage Radius}. The no-coverage femtocell radius $D_f$ determines the minimum SIR feasible femtocell distance from the macrocell (see Fig. \ref{fig:FemtoMacro_CoverageRadii}). Any femtocell user within $D < D_f$ meters from the macrocell experiences an outage probability greater than $\epsilon$ due to excessive cellular interference. This suggests that any user at $D<D_f$ should communicate with the macrocell because of its potentially higher cellular SIR. The cellular coverage radius $D_c$ denotes the maximum SIR feasible distance from the macrocell up to which a cellular user can satisfy its outage probability constraint in the presence of hotspot interference. Since there is no coordination between tiers for managing interference, providing greater spatial reuse using femtocells trades off the coverage radii and vice-versa. Because the cellular network serves as the primary network to mobile outdoor users, it is desirable to maximize $D_c$ in the presence of hotspot interference.





Assuming that each tier employs either transmit beamforming (BF) [for SU transmission] or linear zero-forcing precoding [for MU transmission] with transmission powers $P_c$ and $P_f$ in each resource (eg. frequency sub-band), this work poses the following questions:
\begin{itemize}
\item What is the maximum femtocell contention density $\lambda_f$ as a function of the location $D$ with respect to (w.r.t) the macrocell BS, the ratio $P_c/P_f$, the transmission strategy (SU vs. MU transmission), the number of transmit antennas per macrocell BS and femtocell AP, the target per-tier SIR $\Gamma$, the maximum outage probability $\epsilon$ and the path loss exponents?
\item Given an average of $N_f$ transmitting femtocells per cell-site, how much cellular coverage can the macrocell BS provide to its users?
\item How does the no-coverage femtocell radius vary with SU and MU transmission strategies?
\item How should femtocells adapt their transmission power for ensuring reliable cellular coverage?
\end{itemize}

\subsection{Related Work}
Prior research in tiered networks have mainly considered an operator planned underlay of a macrocell with single/multiple microcells\cite{Ganz1997,Kishore2005a}. A microcell has a much larger radio range (100-500 m) than a femtocell, and generally implies centralized deployment, i.e. by the service-provider. This allows the operator to either load balance users or preferentially assign high data rate cellular users to the microcell \cite{Klein2004,Shen2004} because of its inherently larger capacity. In contrast, femtocells are consumer installed and the traffic requirements at femtocells are user determined without any operator influence. Consequently, decentralized strategies for interference management may be preferred\cite{Claussen2007,ChandrasekharMag2008,ChandrasekharPC2008,ChandrasekharCDMA2009}.

The subject of this work is related to Huang \emph{et al.}\cite{Huang2008} which derives per-tier transmission capacities with spectrum underlay and spectrum overlay. In contrast to their work which assumes relay-assisted cell-edge users, our work proposes to improve coverage by regulating femtocell transmit powers. Hunter \emph{et al.}\cite{Hunter2008} have derived transmission capacities in an \emph{ad hoc} network with spatial diversity. Our work has extended this analysis to a cellular-underlaid \emph{ad hoc} network.

Finally, related works on cognitive radios (CR) include (but not restricted to) \begin{inparaenum} \item analyzing sensing-throughput tradeoffs\cite{Liang2008} for computing optimal sensing time by CR users and \item limit transmit powers of CR users\cite{Hoven2005,Hamdi2007,Ghasemi2007, Qian2007}\end{inparaenum}. The differentiating aspect of our work is a decentralized femtocell transmit power selection scheme which ensures a per-tier outage probability below a desired threshold.

\subsection{Contributions}
Given $T_c$ antennas at the macrocell BS and $T_f$ antennas per
femtocell AP, a maximum tolerable per-tier outage probability
$\epsilon$ and path loss exponents $\alpha_c$ (outdoor cellular transmission)
and $\alpha_{fo}$ (during indoor-to-outdoor femtocell transmission)
respectively, this work provides the following contributions.
\begin{asparadesc}
\item[Coverage.]  We derive coverage zones wherein cross-tier interference prevents users in each tier from satisfying their QoS requirements. Single-user macrocell transmission is shown to increase the cellular coverage radius by a factor $T_c^{2/ \alpha_c}$ relative to MU transmission.
    Single-user femtocell transmission is shown to decrease the no-coverage femtocell radius $D_f$ by a factor of $(T_f/\epsilon^{1-1/T_f})^{1/\alpha_c}$ relative to MU femtocell transmission. This suggests that SU transmission results in superior coverage in either tier. We also show that femtocell performance is regulated by cellular interference and hotspot interference is negligible in comparison.
\item[Hotspot Contention Density.] We derive the maximum femtocell contention density $\lambda_f^{\ast}(D)$ at distance $D$ from the macrocell BS. Two distinct operating regimes are shown namely a \begin{inparaenum} \item Cellular-limited regime, wherein femtocell users are primarily affected by cellular interference and \item Hotspot-limited regime wherein both cellular and hotspot users are affected by hotspot interference\end{inparaenum}. Regime 1 determines the coverage provided to femtocell users, while Regime 2 determines $\lambda_f^{\ast}(D)$ and the cellular coverage radius. In Regime 2, SU macrocell transmission is shown to increase $\lambda_f^{\ast}(D)$ by a factor of $\mathbf{\Gamma}(1-2/\alpha_{fo})T_c^{4/\alpha_{fo}}$ (where $\mathbf{\Gamma}(z) \triangleq \int_{0}^{\infty} t^{z-1}e^{-t} \textrm{ d}t \ \forall \textrm{Re}(t)>0$ is the Gamma function) relative to MU transmission. Femtocells maximize their area spectral efficiency by choosing their transmission strategy depending on $\alpha_{fo}$, with SU transmission being desirable with considerable hotspot interference ($\alpha_{fo} < 4$). This suggests that per-tier SU transmission is preferable from a spatial reuse perspective.
\item[Power control.] We propose a carrier-sensing approach in
which a femtocell chooses its transmit power depending on its distance from the macrocell BS for minimizing cross-tier interference. This strategy provides reliable cellular coverage with up to 60 femtocells per cell site (with typical cellular parameters).
\end{asparadesc}

\section{System Model}
Assume a central macrocell $B_0$ using $T_c$ antennas to service a geographical region $\mathcal{C}$, assumed as a circular disc with radius $R_c$ and area $|\mathcal{C}| = \pi R_c^2$. Each femtocell is equipped with $T_f$ antennas. Femtocell users are located on the circumference of a disc of radius $R_f$ centered at their femtocell AP. Both cellular users and femtocell users are assumed to be equipped with single-antenna receivers. In a given time/frequency slot, each macrocell [resp. femtocell] employs its $T_c$ [resp. $T_f$] antennas for serving $1 \leq U_c \leq T_c$ cellular [resp. $1 \leq U_f \leq T_f$ indoor] users. Although user selection has a potentially beneficial impact, it is not considered in this work for analytical tractability.

This paper employs a stochastic geometry framework for modeling the random spatial distribution of the underlying femtocells. Hotspot locations are likely to vary from one cell site to another, and be opportunistic rather than planned, so an analysis that embraces instead of neglecting randomness should provide more accurate results and more plausible insights. The randomly located femtocells are assumed to be distributed according to a Spatial Poisson Point Process (SPPP) $\Pi_f$ (see \cite{Kingman,Haenggi2009} for background, prior works include \cite{Chan2001,Baccelli2001,Baccelli2006}). Provided $\Pi_f$ is a homogeneous SPPP (or the intensity $\lambda_f$ in femtocells per square meter stays constant over $\mathcal{C}$), the average number of actively transmitting femtocells is readily obtained as $N_f = \lambda_f |\mathcal{C}|$ femtocells per cellsite. Because of near-far effects inherent to a two-tier network, the maximum hotspot intensity varies with the location $D$ in the cell-site.

\subsection{Terrestrial Path Loss Model}
\label{Sec:ChannelModel}
The signal decay encountered using terrestrial propagation to users in either tier is represented using a distance based path loss model. Temporal amplitude variations of the complex vector downlink channel are modeled as frequency-flat Rayleigh fading -- e.g. each frequency sub-band in frequency division multiple access (FDMA) transmission -- with individual complex entries distributed as $\mathcal{CN}(0,1)$. For analytical simplicity, this work does not consider random lognormal shadow fading. Instead, we shall assume a fixed partition loss encountered during outdoor-to-indoor and indoor-to-indoor wireless propagation.
Shown below, we use the IMT-2000 channel model\cite[Page 26]{IMT2000} for modeling indoor losses (based on the COST231 indoor model\cite[Page 44]{IMT2000}) and outdoor path losses.
\begin{asparadesc}
\item[Macrocell to Cellular Users.]
 The decibel path loss between $B_0$ and cellular user $0$ is modeled as $
 \mathrm{PL}_{c,\textrm{dB}} = A_{c,\textrm{dB}}+10\alpha_c \log_{10}D$
where $\alpha_c$ is the outdoor path loss exponent, $A_{c,\textrm{dB}}=30\log_{10}f_c-71$ represents the fixed decibel loss during outdoor propagation, $f_c$ is the carrier frequency in MHz and $D$ is the distance between $B_0$ and its user.\footnote{Strictly speaking, the IMT-2000 pedestrian test model adopts a fixed path loss exponent $\alpha_c = 4$ (with path loss $\mathrm{PL}_{\textrm{dB}}= 30\log_{10}f_c+ 40\log_{10}(D)-71$). To keep the analysis general, this work parameterizes the outdoor path loss exponent.}
 \item[Macrocell to Femtocell Users.]
We model each femtocell as a point object, hence all indoor users served by a given femtocell experience identical path loss from cellular interference. This decibel path loss is given as $\mathrm{PL}_{f,c,\textrm{dB}} =A_{f,c,\textrm{dB}}+10\alpha_c \log_{10}(D)$ where $A_{f,c, \textrm{dB}} = 30\log_{10}f_c-71+W_{\textrm{dB}}$ designates the fixed decibel path loss, $D$ is the distance between $B_0$ and the femtocell and $W_{\textrm{dB}}$ equals the decibel wall partition loss during outdoor-to-indoor wireless propagation.
 \item[Femtocell to Subscribed Home Users.]
The decibel path loss between a femtocell to its licensed, subscribed indoor users is modeled as
$ \mathrm{PL}_{fi,\textrm{dB}} =A_{fi,\textrm{dB}}+10\alpha_{fi} \log_{10}(R_f)$
 where $A_{fi,\textrm{dB}} = 37$ dB models the fixed propagation loss in decibels between the femtocell to its desired user, $\alpha_{fi}$ represents the indoor path loss exponent.
\item[Femtocell to Outdoor Cellular Users.]
Given a transmitting femtocell, any cellular user located at distance $D$ will experience cross-tier interference with decibel path loss modeled as $ \mathrm{PL}_{c,f\textrm{dB}} =A_{c,f,\textrm{dB}}+10\alpha_{fo} \log_{10}(D)$. Here, the fixed decibel path loss is designated by $A_{c,f,\textrm{dB}}=P_{\textrm{dB}}+37$, while $\alpha_{fo}$ denotes the path loss exponent during indoor-to-outdoor wireless propagation.
 \item[Femtocell to Neighboring Femtocells.]
The decibel path loss of the hotspot interference caused by a transmitting femtocell at another femtocell is given as $ \mathrm{PL}_{f,f\textrm{dB}} =A_{f,f,\textrm{dB}}+10\alpha_{fo} \log_{10}(D)$
where $A_{f,f,\textrm{dB}}=2W_{\textrm{dB}}+37$ denotes the fixed decibel path loss (the factor of $2$ models the double wall partition loss during indoor to indoor propagation) and $D$ is the distance between the two femtocells.
 \end{asparadesc}

\section{Per-Tier Signal-to-Interference Ratios}
Assume that the macrocell $B_0$ serves $1 \leq U_c \leq T_c$ users. Define $\mathbf{h}_j \in \mathbb{C}^{T_c \times 1}$ as the channel from $B_0$ to cellular user $j \in \lbrace 0,1,\dots,U_c-1 \rbrace$ with its entries distributed as $h_{k,j} \sim \mathcal{CN}(0,1)$. The direction of each vector channel is represented as $\tilde{\mathbf{h}}_j \triangleq \frac{\mathbf{h}_j}{||\mathbf{h}_j||}$. Designate $\tilde{\mathbf{H}} = [\tilde{\mathbf{h}}_{0}, \ \tilde{\mathbf{h}}_{1}, \  \dots, \ \tilde{\mathbf{h}}_{U_c-1}]^{\dagger} \in \mathbb{C}^{U_c \times T_c}$ as the concatenated matrix of channel directions, where the symbol $\dagger$ denotes conjugate transpose.
\begin{assumption}
\label{AS:as1}
Perfect channel state information (CSI) is assumed at the central macrocell [resp. femtocells] regarding the channels to their own users.
\end{assumption}
Although we acknowledge that imperfect channel estimation plays a potentially significant role, we defer its analysis for subsequent research and instead employ AS\ref{AS:as1} for analytical tractability.
\begin{assumption}
\label{AS:as2}
For analytical tractability, interference from neighboring macrocell BSs is ignored.
\end{assumption}
This work assumes linear zero-forcing (ZF) precoding transmission because it has low complexity, yet achieves the same multiplexing gain as higher complexity schemes such as dirty-paper coding.
With ZF precoding transmission, macrocell BS $B_0$ chooses its precoding matrix $\mathbf{V} \in \mathbb{C}^{T_c \times U_c} = [\mathbf{v}_i]_{1 \leq i \leq U_c}$ as the normalized columns of the pseudoinverse $\tilde{\mathbf{H}}^{\dagger} (\tilde{\mathbf{H}}\tilde{\mathbf{H}}^{\dagger})^{-1} \in \mathbb{C}^{T_c \times U_c}$. Similarly, each femtocell $F_j \in \Pi_F$ serves $1 \leq U_f \leq T_f$ users with the channel directions between $F_j$ to its individual users represented as $\tilde{\mathbf{G}}_j= [\tilde{\mathbf{g}_{0,j}},\   \tilde{\mathbf{g}_{1,j}},\   \dots \ \tilde{\mathbf{g}}_{U_f-1,j} ]^{\dagger} \in \mathbb{C}^{U_f \times T_f}$, where $\tilde{\mathbf{g}}_{i,j} \triangleq \frac{\mathbf{g}_{i,j}}{||\mathbf{g}_{i,j}||}$ with the entries of $\mathbf{g}_{i,j}$ distributed as $\mathcal{CN}(0,1)$. With ZF precoding, the columns of the precoding matrix $\mathbf{W}_j = [\mathbf{w}_{j,i}]_{1 \leq i \leq U_f} \in \mathbb{C}^{T_f \times U_f}$ equal the normalized columns of $\tilde{\mathbf{G}_j}^{\dagger}(\tilde{\mathbf{G}_j} \tilde{\mathbf{G}_j}^{\dagger})^{-1} \in \mathbb{C}^{T_f \times U_f}$.

\subsection{SIR Analysis at a Femtocell User}
\label{Se:SIRFemtoAnlys}
Consider a reference femtocell $F_0$ at distance $D$ from the macrocell $B_0$. During a given signaling interval, the received signal at femtocell user $0$ at distance $R_f$ w.r.t $F_0$ is given as
\begin{align}
y_0 =  \underbrace{\sqrt{{A}_{fi}} R_f^{-\frac{\alpha_{fi}}{2}} \mathbf{g}_0^{\dagger} \mathbf{W}_0 \mathbf{r}_0}_\textrm{Desired Signal} +  \sqrt{{A}_{f,f}}\underbrace{\sum_{F_j \in \Pi_f \setminus{F_0}} |X_{0,j}|^{-\frac{\alpha_{fo}}{2}} \mathbf{g}_{0,j}^{\dagger}\mathbf{W}_j \mathbf{r}_j}_{\textrm{Intra-tier Interference}} +
\underbrace{\sqrt{A_{f,c}} D^{-\frac{\alpha_c}{2}} \mathbf{f}_0^{\dagger}\mathbf{V}\mathbf{s}}_{\textrm{Cross-tier Interference}} + \mathbf{n} \notag
\end{align}
where the vectors $\mathbf{s} \in \mathbb{C}^{U_c \times 1}$ and $\mathbf{r}_j \in \mathbb{C}^{U_f \times 1}$ designate the transmit data symbols for users in $B_0$ and $F_j$, which satisfy $\mathbb{E}[| \mathbf{s}||^2] \leq P_c$ and $\mathbb{E}[||\mathbf{r}_j||^2] \leq P_f$ respectively (assuming equal power allocation) and $\mathbf{n}$ represents background noise. The term $\mathbf{f}_0 \in \mathbb{C}^{T_c \times 1}$ [resp. $\mathbf{g}_{0,j}$] designates the downlink vector channel from the interfering macrocell BS $B_0$ [resp. interfering femtocell AP $F_j$] to user $0$. Neglecting receiver noise for analytical simplicity, the received SIR for user $0$ is given as
\begin{align}
\label{eq:SIRFemto1}
\mathrm{SIR}_f(F_0,D) = \frac{\frac{P_f}{U_f} A_{fi} R_f^{-\alpha_{fi}} |\mathbf{g}_0^{\dagger} \mathbf{w}_{0,0}|^2}{\frac{P_c}{U_c} A_{f,c} D^{-\alpha_c} ||\mathbf{f}_0^{\dagger}\mathbf{V}||^2 + \frac{P_f}{U_f} A_{f,f} \sum_{F_j \in \Pi_f \setminus F_0} ||\mathbf{g}_{0,j}^{\dagger}\mathbf{W}_j||^2 |X_{0,j}|^{-\alpha_{fo}}}.
\end{align}
For successfully decoding the message intended for user $0$, $\mathrm{SIR}_f(F_0,D)$ should be greater than equal to the minimum SIR target $\Gamma$.
For clarity of exposition, we define
\begin{align}
\label{eq:DefinePQ}
\mathcal{P}_f = \frac{P_c}{P_f} \frac{A_{f,c}}{A_{f,f}}D^{-\alpha_c}, \ \mathcal{Q}_f = \frac{A_{f,f}}{A_{fi}}R_f^{\alpha_{fi}}U_f.
\end{align}
User $0$ can successfully decode its signal provided $\mathrm{SIR}_f(F_0,D)$ is at least equal to its minimum SIR target $\Gamma$. Combining \eqref{eq:SIRFemto1} and \eqref{eq:DefinePQ}, the probability of successful reception is given as
\begin{align}
\label{eq:SuccessP}
\mathbb{P}{[\mathrm{SIR}_f(F_0,D) \geq \Gamma]} =
\mathbb{P}{\left [|\mathbf{g}_0^{\dagger} \mathbf{w}_{0,0}|^2 \geq \Gamma \mathcal{Q}_f \left(\frac{\mathcal{P}_f}{U_c}||\mathbf{f}_{0}^{\dagger}\mathbf{V}||^2 + \frac{1}{U_f}\sum_{j \in \Pi_F} ||\mathbf{g}_{0,j}^{\dagger}\mathbf{W}_j||^2 |X_{0,j}|^{-\alpha_{fo}}\right) \right]}.
\end{align}
\begin{align}
\label{eq:Definek}
\mathrm{Let} \ \kappa  = \frac{\mathcal{P}_f\mathcal{Q}_f \Gamma}{U_c} = \Gamma \frac{P_c/U_c}{P_f/U_f}\frac{A_{f,c}}{A_{fi}}\frac{D^{-\alpha_c}}{R_f^{-\alpha_{fi}}}.
\end{align}
Note that $\kappa \geq 0$ and the expression $\frac{\kappa}{\kappa+1} \in [0,1)$ characterizes the relative strength of cellular interference. As $\kappa$ increases (or $\frac{\kappa}{\kappa+1} \rightarrow 1$), user $0$ experiences progressively poor coverage due  higher cellular interference. Conversely, as $\kappa \rightarrow 0$, $\mathrm{SIR}_f(F_0,D)$ is limited by interference from neighboring femtocells. For satisfying the femtocell QoS requirement, $\mathbb{P}{[\mathrm{SIR}_f(F_0,D) \geq \Gamma]} \geq 1-\epsilon$ in equation \eqref{eq:SuccessP}.

Because of cellular interference, any femtocell user within
$D \leq D_f$ meters of $B_0$ ($D_f$ being the no-coverage femtocell
radius) cannot satisfy their QoS requirement. As long as $D>D_f$, a femtocell user can tolerate interference from \emph{both} cellular transmissions and hotspot transmissions. Before computing $D_f$, we provide the following definitions.
\begin{defn}
\label{defn:Betacdf}
Given a Beta distributed random variable $X \sim \textrm{Beta}(a,b)$ with two positive shape parameters $a$ and $b$, denote its cumulative distribution function (cdf) $F_{X}(x) \triangleq \mathbb{P}[X \leq x]$ -- namely the regularized incomplete beta function -- as $\mathcal{I}_x(a,b)$.
\end{defn}

\begin{defn}
\label{defn:InvBetacdf}
Given a Beta distributed random variable $X \sim \textrm{Beta}(a,b)$ with cdf $F_X(x) = \mathcal{I}_{x}(a,b)$, denote its inverse cdf $x \triangleq \mathcal{I}^{-1}(y;a,b)$ as that value of $x$ for which $\mathcal{I}_x(a,b) = y$.
\end{defn}
\vspace{2mm}
\begin{theorem}
\label{Th:NoCoverageFemtoDist}
\emph{Any femtocell $F_0$ within $D< D_f$ meters of the macrocell $B_0$ cannot satisfy its QoS requirement $\epsilon$, where $D_f$ is given as}
\begin{align}
\label{eq:NoCoverageFemtoDist}
D_f = \left[\frac{K}{\Gamma} \frac{P_f/U_f}{P_c/U_c} \left(\frac{\mathcal{I}^{-1}(\epsilon; T_f-U_f+1,U_c)}{1-  \mathcal{I}^{-1}(\epsilon; T_f-U_f+1,U_c)}\right)\right]^{-1/\alpha_c}\textrm{, where } K \triangleq \frac{A_{fi}}{A_{f,c}}R_f^{-\alpha_{fi}}.
\end{align}
\end{theorem}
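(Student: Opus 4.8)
The plan is to reduce the QoS condition $\mathbb{P}[\mathrm{SIR}_f(F_0,D)\geq\Gamma]\geq 1-\epsilon$ of \eqref{eq:SuccessP} to a tail probability of a single Beta random variable, invert that tail, and solve for the distance $D$ at which the constraint becomes tight. The organizing observation is that the no-coverage radius is dictated by cellular interference: the intra-tier (hotspot) term in \eqref{eq:SuccessP} is nonnegative, so dropping it can only \emph{increase} the success probability. Hence if the cellular-interference-only success probability already falls below $1-\epsilon$ at a distance $D$, the true success probability does too, and the femtocell user is in outage. This makes the stated one-directional claim ("any $F_0$ within $D<D_f$ cannot meet its QoS") rigorous while needing only the two dominant terms $S\triangleq|\mathbf{g}_0^\dagger\mathbf{w}_{0,0}|^2$ (desired signal gain) and $I_c\triangleq\|\mathbf{f}_0^\dagger\mathbf{V}\|^2$ (cellular interference).

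First I would pin down the two relevant distributions. With ZF precoding at $F_0$ serving $U_f$ of its $T_f$ spatial streams, $\mathbf{w}_{0,0}$ lies in the $(T_f-U_f+1)$-dimensional subspace orthogonal to the co-scheduled users' channel directions, so the standard ZF beamforming-gain result gives $S\sim\mathrm{Gamma}(T_f-U_f+1,1)$ (equivalently $\chi^2_{2(T_f-U_f+1)}$). For the cross-tier interference, $\mathbf{f}_0\sim\mathcal{CN}(0,\mathbf{I}_{T_c})$ is isotropic and, by the independence of tiers (AS\ref{AS:as2} and the macrocell's ignorance of the femtocell user's channel), independent of the precoder $\mathbf{V}$; each of the $U_c$ precoded streams then contributes $|\mathbf{f}_0^\dagger\mathbf{v}_i|^2\sim\mathrm{Exp}(1)$, and modeling the $U_c$ streams as independent yields $I_c\sim\mathrm{Gamma}(U_c,1)$. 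The same independence makes $S$ and $I_c$ independent.

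Next I would assemble the Beta variable. Using the definition of $\kappa$ in \eqref{eq:Definek}, which satisfies $\kappa=\Gamma\mathcal{Q}_f\mathcal{P}_f/U_c$, the cellular-only success event is $\{S\geq\kappa I_c\}$, equivalently $\{S/(S+I_c)\geq\kappa/(\kappa+1)\}$. Since a ratio of independent Gamma variables with a common scale is Beta distributed, $S/(S+I_c)\sim\mathrm{Beta}(T_f-U_f+1,U_c)$, so by Definition~\ref{defn:Betacdf} the cellular-only success probability equals $1-\mathcal{I}_{\kappa/(\kappa+1)}(T_f-U_f+1,U_c)$. The QoS requirement therefore reads $\mathcal{I}_{\kappa/(\kappa+1)}(T_f-U_f+1,U_c)\leq\epsilon$.

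Finally I would invert and solve. Because $\mathcal{I}_x(\cdot,\cdot)$ is strictly increasing in $x$, the QoS condition is equivalent (Definition~\ref{defn:InvBetacdf}) to $\kappa/(\kappa+1)\leq\beta$ with $\beta\triangleq\mathcal{I}^{-1}(\epsilon;T_f-U_f+1,U_c)$, i.e. $\kappa\leq\beta/(1-\beta)$. Since $\kappa\propto D^{-\alpha_c}$ is strictly decreasing in $D$, this holds precisely for $D$ no smaller than the threshold $D_f$ solving $\kappa(D_f)=\beta/(1-\beta)$; substituting $\kappa=\Gamma\,\frac{P_c/U_c}{P_f/U_f}\,K^{-1}D^{-\alpha_c}$ (which follows from \eqref{eq:Definek} with $K\triangleq\frac{A_{fi}}{A_{f,c}}R_f^{-\alpha_{fi}}$) and solving for $D_f$ reproduces \eqref{eq:NoCoverageFemtoDist}. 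For $D<D_f$ one gets $\kappa>\beta/(1-\beta)$, so QoS fails even without hotspot interference and hence fails a fortiori with it. The main obstacle is the distributional step: correctly identifying both Beta shape parameters --- the beamforming-gain order $T_f-U_f+1$ and the interference order $U_c$ --- and justifying the independence and common-scale conditions that let the ratio-of-Gammas-is-Beta identity apply; the per-stream i.i.d. modeling of the $U_c$ cellular interference contributions is the one genuinely non-trivial modeling choice.
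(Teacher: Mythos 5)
Your proposal is correct and follows essentially the same route as the paper's Appendix~A proof: drop the nonnegative hotspot interference term to upper-bound the success probability, identify $|\mathbf{g}_0^{\dagger}\mathbf{w}_{0,0}|^2\sim\chi^2_{2(T_f-U_f+1)}$ and $\|\mathbf{f}_0^{\dagger}\mathbf{V}\|^2\sim\chi^2_{2U_c}$, express the tail as $\mathcal{I}_{\kappa/(\kappa+1)}(T_f-U_f+1,U_c)$, and invert. Your phrasing via the Beta-distributed ratio $S/(S+I_c)$ is just a restatement of the paper's F-distribution cdf step, not a different argument.
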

\vspace{2mm}
\begin{proof}
Refer to Appendix \ref{Pf:Theorem1}.
\end{proof}

\begin{proposition}
\label{Pr:Incompletebetamonotonicity}
\emph{
The inverse function $\mathcal{I}^{-1}(x; a,b)$ is monotonically increasing with $a$ and monotonically decreasing with $b$ for any $a,b \geq 0$.}
\end{proposition}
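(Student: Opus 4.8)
The plan is to exploit the fact, established by Definitions~\ref{defn:Betacdf} and~\ref{defn:InvBetacdf}, that $\mathcal{I}_x(a,b)$ is the cumulative distribution function (CDF) of a $\textrm{Beta}(a,b)$ random variable $T$ and that $\mathcal{I}^{-1}(\cdot;a,b)$ is the corresponding quantile function. Monotonicity of a quantile in a parameter is governed by the \emph{opposite} monotonicity of the CDF in that parameter: if raising $a$ lowers the CDF pointwise, then a larger argument is needed to reach the same probability level, so the quantile rises. First I would therefore reduce the claim to two pointwise statements, namely that $\mathcal{I}_x(a,b)$ is decreasing in $a$ and increasing in $b$ for every fixed $x\in(0,1)$.

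To establish these, I would appeal to stochastic ordering via the monotone likelihood ratio. Writing the Beta density as $f_{a,b}(t)\propto t^{a-1}(1-t)^{b-1}$ on $(0,1)$, for $a'>a$ the ratio $f_{a',b}(t)/f_{a,b}(t)\propto t^{a'-a}$ is increasing in $t$; hence $\textrm{Beta}(a',b)$ dominates $\textrm{Beta}(a,b)$ in the likelihood-ratio order and therefore in the usual (first-order) stochastic order, giving $\mathcal{I}_x(a',b)\le \mathcal{I}_x(a,b)$. Symmetrically, for $b'>b$ the ratio $f_{a,b'}(t)/f_{a,b}(t)\propto (1-t)^{b'-b}$ is decreasing in $t$, so $\textrm{Beta}(a,b')$ is stochastically smaller and $\mathcal{I}_x(a,b')\ge \mathcal{I}_x(a,b)$. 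Equivalently, these two inequalities follow by differentiating $\mathcal{I}_x$ under the integral sign and recognizing $\partial_a\mathcal{I}_x=\textrm{Cov}(\mathbf{1}\{T\le x\},\ln T)$ and $\partial_b\mathcal{I}_x=\textrm{Cov}(\mathbf{1}\{T\le x\},\ln(1-T))$, then invoking the covariance inequality for monotone functions of a single variable (the indicator is decreasing in $T$, $\ln T$ increasing, and $\ln(1-T)$ decreasing).

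The final step is to transfer these to the inverse. Fixing a level $y\in(0,1)$ and abbreviating $x_0=\mathcal{I}^{-1}(y;a,b)$, the monotonicity in $a$ gives $\mathcal{I}_{x_0}(a',b)\le \mathcal{I}_{x_0}(a,b)=y=\mathcal{I}_{\mathcal{I}^{-1}(y;a',b)}(a',b)$; since the Beta density is strictly positive on $(0,1)$, the map $x\mapsto \mathcal{I}_x(a',b)$ is a strictly increasing continuous bijection onto $(0,1)$, and the inequality inverts to $\mathcal{I}^{-1}(y;a,b)\le \mathcal{I}^{-1}(y;a',b)$, i.e.\ the quantile increases with $a$. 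The analogous argument with the reversed CDF inequality yields that the quantile decreases with $b$, the boundary levels $y\in\{0,1\}$ being trivial.

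I expect the main obstacle to be the inversion step rather than the CDF monotonicity: one must argue carefully that strict monotonicity and continuity of $\mathcal{I}_x(a,b)$ in $x$ (guaranteed by positivity of the Beta density on the open interval) permit the pointwise CDF inequality to be ``flipped'' into the quantile inequality, and one must treat the degenerate parameter values $a=0$ or $b=0$ separately, where the $\textrm{Beta}$ law collapses to a point mass at $0$ or $1$ and the quantile is constant.
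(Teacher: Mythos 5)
Your proof is correct, but it follows a genuinely different route from the paper's. The paper's argument is purely combinatorial: it invokes two closed-form finite-sum expansions of $\mathcal{I}_x(a,b)$ valid for positive \emph{integer} shape parameters, namely $\mathcal{I}_x(a,b) = 1 - \sum_{i=1}^{a} \frac{\mathbf{\Gamma}(b+i-1)}{\mathbf{\Gamma}(b)\mathbf{\Gamma}(i)}x^{i-1}(1-x)^b = \sum_{i=1}^{b} \frac{\mathbf{\Gamma}(a+i-1)}{\mathbf{\Gamma}(a)\mathbf{\Gamma}(i)}x^a(1-x)^{i-1}$, and reads off the monotonicity term by term: incrementing $a$ subtracts additional nonnegative terms, while incrementing $b$ adds them. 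That argument is elementary and fully explicit, but it only handles integer increments of integer parameters --- which suffices for the paper's application, where $a = T_f - U_f + 1$ and $b = U_c$. Your monotone-likelihood-ratio (equivalently, score-function covariance plus the Chebyshev correlation inequality) argument establishes the CDF monotonicity for arbitrary real $a, b > 0$ and continuous parameter variation, so it actually proves the proposition at the generality at which it is stated (``any $a,b \geq 0$''), whereas the paper's proof technically does not. You are also more careful on the inversion step: the paper simply asserts ``consequently the inverse function is monotone,'' while you spell out why strict positivity of the density makes $x \mapsto \mathcal{I}_x(a,b)$ a continuous strictly increasing bijection, which is what licenses flipping the pointwise CDF inequality into a quantile inequality, and you correctly flag the degenerate boundary cases $a = 0$ or $b = 0$. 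The trade-off is that your route imports stochastic-ordering machinery where the paper gets by with a table lookup and a one-line observation.
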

\begin{proof}
We use the following two expansions \cite[Page 29]{GuptaNadarajan} for $\mathcal{I}_x(a,b)$.
\begin{align}
\mathcal{I}_x(a,b) \overset{(a)}= 1 - \sum_{i = 1}^{a} \frac{\mathbf{\Gamma}(b+i-1)}{\mathbf{\Gamma}(b) \mathbf{\Gamma}(i)}x^{i-1}(1-x)^b \overset{(b)}= \sum_{i = 1}^{b} \frac{\mathbf{\Gamma}(a+i-1)}{\mathbf{\Gamma}(a) \mathbf{\Gamma}(i)}x^a (1-x)^{i-1}
\end{align}
where the Gamma function $\mathbf{\Gamma}(k) = (k-1)!$ for any positive integer $k$. Relation (a) shows that $\mathcal{I}_{x}(a,b)$ monotonically decreases with $a$. The equivalent Relation (b) shows that $\mathcal{I}_x(a,b)$ is monotone increasing w.r.t $b$. Consequently, the inverse function $\mathcal{I}^{-1}(x; a,b)$ monotonically increases with $a$ and monotonically decreases with $b$.
\end{proof}
\begin{remark}
With SU femtocell transmission ($U_f = 1$), the no-coverage radius $D_{f,\textrm{SU}}$ is strictly smaller than the no-coverage radius $D_{f,\textrm{MU}}$ with MU transmission ($1 < U_f \leq T_f$). This follows by applying Proposition \ref{Pr:Incompletebetamonotonicity} to \eqref{eq:NoCoverageFemtoDist} in Theorem \ref{Th:NoCoverageFemtoDist}.
\end{remark}
\vspace{2mm}
\begin{corollary}
\label{Co:NoCoverageFemtoDist}
\emph{With $K$ as defined in Theorem \ref{Th:NoCoverageFemtoDist} and $U_c = 1$, the reduction in the no-coverage radius using a SU transmission strategy at femtocells relative to MU transmission to $U_f = T_f$ users [resp. single antenna transmission] is given as}
\begin{align}
\frac{D_{f,\textrm{SU}}}{D_{f,\textrm{MU}}} &= \left[ \left( \frac{1-\epsilon^{1/T_f}}{\epsilon^{1/T_f}} \right) \frac{\epsilon}{1-\epsilon} \frac{1}{T_f} \right]^{1/\alpha_c}
                                    \approx \left[ \frac{\epsilon^{1-1/T_f}}{T_f} \right]^{1/\alpha_c}. \notag \\
\frac{D_{f,\textrm{SU}}}{D_{f,1 \textrm{ Antenna}}} &= \left[ \left( \frac{1-\epsilon^{1/T_f}}{\epsilon^{1/T_f}} \right) \frac{\epsilon}{1-\epsilon}\right]^{1/\alpha_c}
                                    \approx \epsilon^{\frac{1}{\alpha_c}\left( 1-1/T_f \right)}. \notag
\end{align}
\end{corollary}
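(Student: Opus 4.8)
The plan is to specialize the general expression for $D_f$ in Theorem \ref{Th:NoCoverageFemtoDist} to the three transmission strategies of interest, evaluate the inverse regularized incomplete beta function in closed form, and then take the appropriate ratios so that the common propagation and power factors cancel.

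First I would set $U_c = 1$ throughout, as the corollary requires. The crucial observation is that the second shape parameter of the Beta distribution in \eqref{eq:NoCoverageFemtoDist} then equals $U_c = 1$, and for $b = 1$ the expansion (b) invoked in the proof of Proposition \ref{Pr:Incompletebetamonotonicity} collapses to a single term, giving $\mathcal{I}_x(a,1) = x^a$. Hence the inverse cdf is explicit, namely $\mathcal{I}^{-1}(\epsilon; a, 1) = \epsilon^{1/a}$. I would then instantiate this for each strategy. For single-user femtocell transmission ($U_f = 1$) the first shape parameter is $T_f - U_f + 1 = T_f$, so $\mathcal{I}^{-1}(\epsilon; T_f, 1) = \epsilon^{1/T_f}$; for MU transmission to $U_f = T_f$ users the first parameter is $1$, giving $\mathcal{I}^{-1}(\epsilon; 1, 1) = \epsilon$; and single-antenna transmission ($T_f = 1$, hence necessarily $U_f = 1$) likewise yields $\mathcal{I}^{-1}(\epsilon; 1, 1) = \epsilon$.

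Next I would substitute each value into \eqref{eq:NoCoverageFemtoDist}, writing $D_f = \left[\frac{K}{\Gamma}\frac{P_f}{U_f P_c}\frac{\mathcal{I}^{-1}}{1-\mathcal{I}^{-1}}\right]^{-1/\alpha_c}$ and forming ratios. Since $K$, $\Gamma$, $P_c$, and $P_f$ are identical across strategies, they cancel cleanly, leaving only the beta terms and the $1/U_f$ factors. For the SU-versus-MU ratio the $1/U_f$ factors differ ($U_f = 1$ versus $U_f = T_f$), producing the stated $1/T_f$, while the beta terms contribute $\frac{\epsilon^{1/T_f}/(1-\epsilon^{1/T_f})}{\epsilon/(1-\epsilon)}$; the outer exponent $-1/\alpha_c$ inverts this quotient to reproduce the first displayed equality. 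For the SU-versus-single-antenna ratio both strategies have $U_f = 1$, so no $1/U_f$ discrepancy arises and only the beta terms survive, yielding the second equality.

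Finally, the approximations follow from a small-outage (equivalently large partition-loss) regime in which $1 - \epsilon \approx 1$ and $1 - \epsilon^{1/T_f} \approx 1$, so that $\frac{1-\epsilon^{1/T_f}}{\epsilon^{1/T_f}}\frac{\epsilon}{1-\epsilon} \approx \epsilon^{1-1/T_f}$; combining this with the $1/T_f$ factor and the exponent $1/\alpha_c$ gives the compact forms $\left[\epsilon^{1-1/T_f}/T_f\right]^{1/\alpha_c}$ and $\epsilon^{(1-1/T_f)/\alpha_c}$ respectively. I do not anticipate a genuine obstacle here: the only step demanding care is justifying the closed form of the inverse incomplete beta function, which the $U_c = 1$ reduction to $\mathcal{I}_x(a,1) = x^a$ settles immediately, and the remainder is bookkeeping of which factors persist after cancellation.
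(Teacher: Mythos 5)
Your proposal is correct and follows essentially the same route as the paper: with $U_c=1$ the regularized incomplete beta function reduces to $\mathcal{I}_x(a,1)=x^a$, so its inverse is $\epsilon^{1/a}$, and substituting the resulting explicit expressions for $D_{f,\textrm{SU}}$, $D_{f,\textrm{MU}}$, and $D_{f,1\textrm{ Antenna}}$ into \eqref{eq:NoCoverageFemtoDist} and cancelling the common factors (with the $1/U_f$ term supplying the $1/T_f$) yields the stated ratios and small-$\epsilon$ approximations. The only cosmetic difference is that the paper evaluates $\mathcal{I}_{\kappa/(\kappa+1)}(T_f-U_f+1,1)$ and solves for $\kappa$, whereas you invert the cdf directly; these are the same computation.
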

\vspace{3 mm}
\begin{proof}
With SU femtocell transmission [resp. MU transmission to $U_f = T_f$ users] and $U_c = 1$ user, the incomplete beta function $\mathcal{I}_{\frac{\kappa}{\kappa+1}}(T_f-U_f+1,U_c)$  simplifies as
\begin{align}
\mathcal{I}_{\frac{\kappa_1}{\kappa_1+1}}(T_f,1) = \left(\frac{\kappa_1}{\kappa_1+1} \right)^{T_f}, \ \mathcal{I}_{\frac{\kappa_2}{\kappa_2+1}}(1,1) = \left(\frac{\kappa_2}{\kappa_2+1} \right)
\end{align}
where $\kappa_1 = \frac{1}{K}\frac{P_c}{P_f}D_{f,\textrm{SU}}^{-\alpha_c}$ and
$\kappa_2 = \frac{1}{K}\frac{P_c}{P_f/T_f}D_{f,\textrm{MU}}^{-\alpha_c}$ respectively. Therefore, the no-coverage distances in \eqref{eq:NoCoverageFemtoDist} are respectively given as
\begin{align}
\label{eq:DfSUMU}
D_{f,\textrm{SU}} =    \left[\frac{K}{\Gamma} \frac{P_f}{P_c} \left(\frac{\epsilon^{1/T_f}}{1-\epsilon^{1/T_f}}\right)\right]^{-1/\alpha_c}, \ D_{f,\textrm{MU}} =    \left[\frac{K}{\Gamma} \frac{P_f/T_f}{P_c} \left(\frac{\epsilon}{1-\epsilon}\right)\right]^{-1/\alpha_c}.
\end{align}
Assuming small $\epsilon$, SU femtocell transmission consequently reduces $D_f$ by a factor of approximately $\left(\frac{T_f}{\epsilon^{1-1/T_f}}\right)^{1/\alpha_c}$ relative to MU transmission. A similar argument shows that SU transmission reduces $D_f$ by a factor of approximately $\epsilon^{-\frac{1}{\alpha_c}\left( 1-1/T_f \right)}$ relative to single antenna transmission.
\end{proof}
\vspace{3mm}
\begin{remark}
\label{Re:re1}
For fixed $T_f$, $U_f$ and $U_c$, the no-coverage femtocell radius $D_f$ in \eqref{eq:NoCoverageFemtoDist} scales as $(P_f/P_c)^{-1/\alpha_c}$. Decreasing $D_f$ by a factor of $k$ requires increasing $P_f$ by $10\alpha_c \log_{10}k $ decibels. This suggests that a graph of $D_f$ versus $P_f/P_c$ (see Fig. \ref{fig:FemtoSDMA_vs_SUBF}) is a straight line on a log-log scale with slope $-1/\alpha_c$.
\end{remark}
\vspace{2mm}

Next, we derive the maximum obtainable spatial reuse from multiple antenna femtocells when they share spectrum with cellular transmissions. Mathematically, the \emph{maximum femtocell contention density} satisfying \eqref{eq:SuccessP} is expressed as
\begin{align}
\label{eq:FemtoOptProb}
\lambda_f^{\ast}(D) = \argmax \lambda_f(D), \
\textrm{subject to } \mathbb{P}{(\mathrm{SIR}_f(F_0,D) \geq \Gamma)} \geq 1-\epsilon.
\end{align}

\begin{theorem}
\label{Th:OptimalFemtoDensity}
\emph{In a two-tier network, the maximum femtocell contention density $\lambda_f^{\ast}(D)$ at distance $D$ from the macrocell $B_0$, which satisfies \eqref{eq:FemtoOptProb} (in the small-$\epsilon$ regime) is given as
\begin{align}
\label{eq:OptimalFemtoDensity}
\lambda_f^{\ast}(D) = \frac{1}{\mathcal{C}_f (\mathcal{Q}_f \Gamma)^{\delta_f}} \left[\frac{\epsilon - \mathcal{I}_{\frac{\kappa}{\kappa+1}}(T_f-U_f+1,U_c)}{\frac{1}{\mathcal{K}_f}-\mathcal{I}_{\frac{\kappa}{\kappa+1}}(T_f-U_f+1,U_c)}\right]
\end{align}
where $\delta_f = 2/\alpha_{fo}$, $\mathcal{Q}_f$ is given by \eqref{eq:DefinePQ}, $\kappa$ is given by \eqref{eq:Definek}, and
\begin{gather}
\label{eq:DefineCf}
\mathcal{C}_f = \pi \delta_f U_f^{-\delta_f} \sum_{k=0}^{U_f-1} \binom{U_f}{k} B(k+\delta_f,U_f-k-\delta_f) \\
\label{eq:DefineKf}
\mathcal{K}_f = \left [ 1+ \frac{1}{(1+\kappa)^{U_c}} \sum_{j =0}^{T_f-U_f-1}\left( \frac{\kappa}{\kappa+1} \right)^{j} \binom{U_c+j-1}{j} \sum_{l=1}^{T_f-U_f-j} \frac{1}{l!}\prod_{m=0}^{(l-1)}(m-\delta_f) \right]^{-1}
\end{gather}
where $B(a,b) =\frac{\mathbf{\Gamma}(a) \mathbf{\Gamma}(b)}{\mathbf{\Gamma}(a+b)}$ denotes the Beta function  and $\mathcal{K}_f = 1$ whenever  $U_f = T_f$.
}
\end{theorem}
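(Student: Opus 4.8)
The plan is to turn the coverage constraint of \eqref{eq:FemtoOptProb} into an explicit relation between the femtocell outage probability and the intensity $\lambda_f$, and then to linearise that relation in $\lambda_f$ (the small-$\epsilon$ regime) so it can be inverted in closed form. Write $a \triangleq T_f-U_f+1$. First I would record the fading gains in \eqref{eq:SuccessP}: under zero-forcing the desired gain $S=\abs{\mathbf{g}_0^{\dagger}\mathbf{w}_{0,0}}^2$ is $\textrm{Gamma}(a,1)$, the cross-tier gain $I_c=\vectornorm{\mathbf{f}_0^{\dagger}\mathbf{V}}^2$ is $\textrm{Gamma}(U_c,1)$, and each intra-tier mark $g_j=\vectornorm{\mathbf{g}_{0,j}^{\dagger}\mathbf{W}_j}^2$ is $\textrm{Gamma}(U_f,1)$, all mutually independent and independent of $\Pi_f$ (these are the same gain laws that underlie Theorem \ref{Th:NoCoverageFemtoDist}). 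Collecting the interference as $Y=\kappa I_c+\frac{\Gamma\mathcal{Q}_f}{U_f}I_f$ with $\kappa$ from \eqref{eq:Definek}, $\mathcal{Q}_f$ from \eqref{eq:DefinePQ} and $I_f=\sum_{F_j\in\Pi_f}g_j\abs{X_{0,j}}^{-\alpha_{fo}}$, and using that $a$ is a positive integer so that $\mathbb{P}[S\geq y]=e^{-y}\sum_{n=0}^{a-1}y^n/n!$, the success probability in \eqref{eq:SuccessP} becomes $\sum_{n=0}^{a-1}\frac{(-1)^n}{n!}\frac{d^n}{ds^n}\mathcal{L}_Y(s)\big|_{s=1}$, where $\mathcal{L}_Y$ is the Laplace transform of $Y$.

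Next I would factor $\mathcal{L}_Y(s)=\mathcal{L}_{\kappa I_c}(s)\,\mathcal{L}_{\frac{\Gamma\mathcal{Q}_f}{U_f}I_f}(s)$ by independence. The cellular factor is the scaled-Gamma transform $\mathcal{L}_{\kappa I_c}(s)=(1+s\kappa)^{-U_c}$. The hotspot factor is the Laplace functional of the $\textrm{Gamma}(U_f)$-marked homogeneous SPPP $\Pi_f$ with path loss $\abs{x}^{-\alpha_{fo}}$; by Campbell's theorem, evaluating the radial integral in polar coordinates (which converges exactly when $\delta_f=2/\alpha_{fo}<1$, i.e. $\alpha_{fo}>2$) and using the fractional moment $\mathbb{E}[g^{\delta_f}]=\mathbf{\Gamma}(U_f+\delta_f)/\mathbf{\Gamma}(U_f)$ gives $\mathcal{L}_{\frac{\Gamma\mathcal{Q}_f}{U_f}I_f}(s)=\exp(-\lambda_f\,\mathcal{C}_f(\Gamma\mathcal{Q}_f)^{\delta_f}s^{\delta_f})$, with $\mathcal{C}_f$ as in \eqref{eq:DefineCf}; the binomial-sum form of \eqref{eq:DefineCf} is just an equivalent rewriting of $\pi U_f^{-\delta_f}\mathbf{\Gamma}(1-\delta_f)\mathbf{\Gamma}(U_f+\delta_f)/\mathbf{\Gamma}(U_f)$.

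I would then specialise to the sparse regime. Expanding $\exp(-\lambda_f\mathcal{C}_f(\Gamma\mathcal{Q}_f)^{\delta_f}s^{\delta_f})=1-\lambda_f\mathcal{C}_f(\Gamma\mathcal{Q}_f)^{\delta_f}s^{\delta_f}+O(\lambda_f^2)$ and substituting into the finite-sum success probability, the $\lambda_f$-independent term reproduces the cellular-only success probability $1-\mathcal{I}_{\kappa/(\kappa+1)}(a,U_c)$ already isolated in Theorem \ref{Th:NoCoverageFemtoDist}. Hence the outage is $q(\lambda_f)=\mathcal{I}_{\kappa/(\kappa+1)}(a,U_c)+\lambda_f\,\mathcal{C}_f(\Gamma\mathcal{Q}_f)^{\delta_f}\,\Phi+O(\lambda_f^2)$, where $\Phi=\sum_{n=0}^{a-1}\frac{(-1)^n}{n!}\frac{d^n}{ds^n}\big[(1+s\kappa)^{-U_c}s^{\delta_f}\big]\big|_{s=1}$. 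Imposing $q(\lambda_f^{\ast})=\epsilon$ and inverting the linearised relation yields $\lambda_f^{\ast}=[\epsilon-\mathcal{I}_{\kappa/(\kappa+1)}(a,U_c)]/\big[\mathcal{C}_f(\Gamma\mathcal{Q}_f)^{\delta_f}\Phi\big]$, which is exactly \eqref{eq:OptimalFemtoDensity} once one shows $\Phi=1/\mathcal{K}_f-\mathcal{I}_{\kappa/(\kappa+1)}(a,U_c)$.

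The main obstacle is this last identification. Computing $\Phi$ needs the Leibniz rule on the product $(1+s\kappa)^{-U_c}s^{\delta_f}$: the derivatives of $(1+s\kappa)^{-U_c}$ produce the coefficients $\binom{U_c+j-1}{j}$ and, at $s=1$, the powers $(\kappa/(\kappa+1))^j$, while the derivatives of $s^{\delta_f}$ produce the falling-factorial products $\prod_{m=0}^{l-1}(m-\delta_f)$ appearing in \eqref{eq:DefineKf}; reorganising the resulting double sum and matching it against the series expansion of $\mathcal{I}_{\kappa/(\kappa+1)}(a,U_c)$ (the expansion used in Proposition \ref{Pr:Incompletebetamonotonicity}, with $\mathcal{I}$ as in Definition \ref{defn:Betacdf}) is the delicate bookkeeping that delivers $1/\mathcal{K}_f$. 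A clean consistency check is $U_f=T_f$ ($a=1$): the sum collapses to $\Phi=(1+\kappa)^{-U_c}$ and one recovers $\mathcal{K}_f=1$, as stated. A secondary point to justify is that retaining only the $O(\lambda_f)$ term is the correct reading of ``maximum contention density in the small-$\epsilon$ regime,'' i.e. that the dropped $O(\lambda_f^2)$ terms are negligible as $\epsilon$, and hence $\lambda_f^{\ast}$, become small.
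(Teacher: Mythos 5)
Your proposal is correct and follows essentially the same route as the paper's Appendix~\ref{Pf:Theorem2}: conditioning on the aggregate interference to write the success probability as a finite sum of Laplace-transform derivatives, factoring the transform into the $(1+s\kappa)^{-U_c}$ cellular part and the Campbell-theorem exponential $\exp(-\lambda_f\mathcal{C}_f(\mathcal{Q}_f\Gamma)^{\delta_f}s^{\delta_f})$, linearising in $\lambda_f$, and applying the Leibniz rule. The ``delicate bookkeeping'' you defer is precisely the paper's identities \eqref{eq:SimpId1}--\eqref{eq:SimpId2} (via Proposition~\ref{Pr:IncompleteBeta}), and your $U_f=T_f$ consistency check matches the stated $\mathcal{K}_f=1$ case.
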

\vspace{2mm}
\begin{proof} Refer to Appendix \ref{Pf:Theorem2}
\end{proof}

\vspace{2mm}
Theorem \ref{Th:OptimalFemtoDensity} provides the maximum femtocell contention density at $D$ considering both cross-tier cellular and hotspot interference from neighboring femtocells. Alternatively, given an average of $\lambda_f$ transmitting femtocells per square meter, \eqref{eq:OptimalFemtoDensity}
can be inverted (numerically) to obtain the minimum $D$ which guarantees that \eqref{eq:FemtoOptProb} is feasible. Theorem \ref{Th:OptimalFemtoDensity}
provides two fundamental operational regimes depending on the hotspot location relative to the macrocell.
\vspace{2mm}

\begin{asparadesc}
\item[Cellular-limited regime.] Assuming $\epsilon < \frac{1}{\mathcal{K}_f}$, a necessary condition for $\lambda_f^{\ast}(D) \geq 0$ in \eqref{eq:OptimalFemtoDensity} is $\mathcal{I}_{\frac{\kappa}{\kappa+1}}(T_f-U_f+1,U_c) \leq \epsilon$, or $\kappa(D)$ in \eqref{eq:Definek} is upper bounded as $$\kappa \leq \frac{\mathcal{I}^{-1}(\epsilon,T_f-U_f+1,U_c)}{1-\mathcal{I}^{-1}(\epsilon,T_f-U_f+1,U_c)}.$$ Indeed, from Theorem \ref{Th:NoCoverageFemtoDist}, a femtocell cannot guarantee reliable coverage to its users because of excessive cross-tier interference, whenever the above condition is violated.
\item[Hotspot-limited regime.] As $\kappa \rightarrow 0$ or $D^{-\alpha_c} \rightarrow 0$, the SIR at any femtocell located at $D$ is primarily influenced by hotspot interference. Consequently, $\lambda_f^{
    \ast}(D)$ in \eqref{eq:OptimalFemtoDensity} approaches the limit $\breve{\lambda}_f$ given as
    \begin{align}
    \label{eq:LimitingFemtoDensity}
    \breve{\lambda}_f = \lim_{\kappa \rightarrow 0} \lambda_f^{\ast}(D) = \frac{\epsilon \ \breve{\mathcal{K}}_f}{\mathcal{C}_f (\mathcal{Q}_f\Gamma)^{\delta_f}} \
    \textrm{, where } \breve{\mathcal{K}}_f = \lim_{\kappa \rightarrow 0} \mathcal{K}_f = \left \lbrack 1 + \sum_{l =1}^{T_f-U_f} \frac{1}{l!} \prod_{m=0}^{l-1} (m-\delta_f) \right \rbrack^{-1}.
    \end{align}
    The limit $\breve{\mathcal{K}}_f$ determines the maximum contention density in the special case of an \emph{ad hoc} network -- no cellular interference -- of homogeneously distributed transmitters equipped with multiple antennas \cite{Hunter2008}. Their work shows that $\breve{\mathcal{K}}_f$ and $\mathcal{C}_f$ scales with $T_f$ and $U_f$ as
    \begin{align}
    \label{eq:KfCfscaling}
    \breve{\mathcal{K}}_f \sim \Theta[(T_f-U_f+1)^{\delta_f}], \ \mathcal{C}_f\mathcal{Q}_f^{\delta_f} \sim  \Theta(U_f^{\delta_f}).
    \end{align}
\end{asparadesc}
Further, $\forall \kappa \geq 0, \mathcal{K}_f \leq \breve{\mathcal{K}}_f$ and $\mathcal{K}_f$ is bounded as $(T_f-U_f+1)^{\delta_f} \leq \breve{\mathcal{K}}_f \leq \breve{\mathcal{K}}_{f,\textrm{max}} = \mathbf{\Gamma}(1-\delta_f) (T_f-U_f+1)^{\delta_f}$ \cite{Hunter2008}.
We shall now consider two cases in the hotspot-limited regime ($D^{-\alpha_c} \rightarrow 0$). First, with multiuser transmission to $U_f = T_f$ femtocell users and using \eqref{eq:KfCfscaling}, the femtocell area spectral efficiency (in $\mathrm{b/s/Hz/m^2}$) which is given as $(1-\epsilon) U_f \breve{\lambda}_f \log_2(1+\Gamma)$ scales according to $\Theta(T_f^{1-\delta_f})$. With SU transmission, the ASE scales as $\Theta(T_f^{\delta_f})$. This suggests that in path loss regimes with $\alpha_{fo} < 4$, higher spatial reuse is obtainable (order-wise) provided femtocells employ their antennas to transmit to just one user. In contrast, MU femtocell transmission provides higher network-wide spatial reuse (order-wise) only when hotspot interference is significantly diminished ($\alpha_{fo}>4$).


\subsection{SIR Analysis at a Cellular User}
\label{Se:SIRCellrAnlys}
We now consider a reference cellular user $0$ at distance $D$ from their macrocell $B_0$.
During a given signaling interval, neglecting background noise, the received signal at user $0$ is then given as
\begin{align}
y_0 = \sqrt{A_{c}} D^{-\frac{\alpha_c}{2}} \mathbf{h}_{0}^{\dagger}\mathbf{V}\mathbf{s}+ \sqrt{{A}_{c,f}}\sum_{F_j \in \Pi_f}  \mathbf{e}_{j}^{\dagger}\mathbf{W}_j \mathbf{r}_j |X_{j}|^{-\frac{\alpha_{fo}}{2}}
\end{align}
where $\mathbf{s} \in \mathbb{C}^{U_c \times 1}, \mathbb{E}[||\mathbf{s}||^2] \leq P_c$ and $\mathbf{r}_j \in \mathbf{C}^{U_f \times 1}, \mathbb{E}[||\mathbf{r}_j||^2] \leq P_f$ represent the transmit data symbols for users in each tier. Further, $|X_{j}|$ and $\mathbf{e}_{j} \in \mathbb{C}^{T_c \times 1}$ respectively denote the distance and the downlink vector channel from the interfering femtocell $F_j$ to user $0$. The received SIR for user $0$ is given as
\begin{align}
\label{eq:SIRCellular}
\mathrm{SIR}_c(B_0,D) = \frac{\frac{P_c}{U_c} A_{c} D^{-\alpha_{c}} |\mathbf{h}_0^{\dagger}\mathbf{v}_0|^2}{\frac{P_f}{U_f} A_{c,f} \sum_{F_j \in \Pi_f} ||\mathbf{e}_{j}^{\dagger}\mathbf{W}_j||^2 |X_{j}|^{-\alpha_{fo}}}.
\end{align}
For successfully decoding user $0$'s signal, $\mathrm{SIR}_c(B_0,D)$ should be greater than equal to the minimum SIR target $\Gamma$. Define $\mathcal{Q}_c = U_c \frac{P_f}{P_c} \frac{A_{c,f}}{A_c} D^{\alpha_c}$. Then, the probability of successful reception at $0$ is given as
\begin{align}
\label{eq:OutagePCell1}
\mathbb{P}{[\mathrm{SIR}_c(B_0,D) \geq \Gamma]} = \mathbb{P}{\left [ \frac{|\mathbf{h}_0^{\dagger}\mathbf{v}_0|^2}{\frac{1}{U_f}\sum_{F_j \in \Pi_f} ||\mathbf{e}_{j}^{\dagger}\mathbf{W}_j||^2 |X_{j}|^{-\alpha_{fo}}}  \geq \mathcal{Q}_c \Gamma \right]}.
\end{align}
Both the desired channel powers denoted as $|\mathbf{h}_0^{\dagger}\mathbf{v}_0|^2$ and the interfering marks \cite{Kingman} given by $||\mathbf{e}_{j}^{\dagger}\mathbf{W}_j||^2$ follow a chi-squared distribution with $2(T_c-U_c+1)$ and $2U_f$ degrees of freedom respectively. Using \cite{Hunter2008}, the maximum femtocell contention density $\lambda_f(D)$ for which \eqref{eq:OutagePCell1} satisfies the maximum outage probability constraint $\mathbb{P}{ (\mathrm{SIR}_c(B_0,D) \geq \Gamma)} \geq 1-\epsilon$ of a cellular user is given as
\begin{align}
\label{eq:OptFemtoContDnstyCellr}
\lambda_f^{\ast}(D) = \frac{\epsilon \mathcal{K}_c}{\mathcal{C}_f (\mathcal{Q}_c \Gamma)^{\delta_f}}
\textrm{, where } \mathcal{K}_c = \left \lbrack 1+ \sum_{j=1}^{T_c-U_c} \frac{1}{j!} \prod_{k=0}^{j-1}(k-\delta_f) \right \rbrack ^{-1}
\end{align}
where $\delta_f = 2/\alpha_{fo}$ as before and $\mathcal{C}_f$ is given by \eqref{eq:DefineCf}. From \cite{Hunter2008}, $\mathcal{K}_c$ is bounded as
\begin{align}
\label{eq:Kcbounds}
(T_c-U_c+1)^{\delta_f} \leq \mathcal{K}_c \leq \mathbf{\Gamma}(1-\delta_f) \ (T_c-U_c+1)^{\delta_f}
\end{align}
where the upper bound is a good approximation for $\mathcal{K}_c$; for example, with $T_c = 4$, $U_c = 1$ and $\alpha_{fo} = 3.8$, the term $K_c$ equals $3.47$ while the upper bound equals $3.87$.
\begin{remark}
\label{Re:LambdafScalingCellular}
Since \eqref{eq:OptFemtoContDnstyCellr} varies as $ \mathcal{K}_c/U_c^{\delta_f}$, approximating $\mathcal{K}_c$ by the upper bound in \eqref{eq:Kcbounds} shows that the maximum contention density for single user beamforming given as $\lambda_{f,\textrm{SU}}^{\ast}(D)$ is proportional to $\mathbf{\Gamma}(1-\delta_f)T_c^{\delta_f}$. With $1< U_c < T_c$ transmitted users, the maximum femtocell contention density denoted as $\lambda_{f,\textrm{MU}}^{\ast}(D)$ is proportional to $\mathbf{\Gamma}(1-\delta_f)(T_c-U_c+1)^{\delta_f}/U_c^{\delta_f}$. Therefore, SU transmission increases the maximum hotspot density by a factor of $\left[T_c U_c/(T_c-U_c+1)\right]^{\delta_f}$. With $U_c = T_c$ users (implying $\mathcal{K}_c = 1$), one obtains $\lambda_{f,\textrm{MU}}^{\ast}(D)$ to be proportional to $T_c^{-\delta_f}$, so that $\lambda_{f,\textrm{SU}}^{\ast}(D)/\lambda_{f,\textrm{MU}}^{\ast}(D)$ equals $\mathbf{\Gamma}(1-\delta_f) T_c^{2\delta_f}$.
\end{remark}
\vspace{2mm}
Given an average of $\lambda_f$ femtocells per sq. meter, inverting \eqref{eq:OptFemtoContDnstyCellr} yields the maximum distance up to which the cellular outage probability lies below $\epsilon$. This cellular coverage radius $D_c$ is given as
\begin{align}
\label{eq:CellularCvrgRadius}
D_c = \left(\frac{1}{\Gamma U_c}\frac{A_c}{A_{c,f}}  \frac{P_c}{P_f}\right)^{1/\alpha_c} \left({\frac{\epsilon \mathcal{K}_c}{\lambda_f \mathcal{C}_f}}\right)^{\frac{1}{\delta_f \alpha_c}}.
\end{align}
\begin{remark}
Since $D_c$ varies as $\left(P_c/P_f\right)^{1/\alpha_c}$, increasing the cellular coverage radius by a factor of $k$ necessitates increasing $P_c$ by $10\alpha_c \log_{10}k$ decibels relative to $P_f$.
\end{remark}
\vspace{2mm}
\begin{remark}
\label{Re:CellularCoverageScaling}
In \eqref{eq:CellularCvrgRadius}, $D_c$ is proportional to $(\frac{\mathcal{K}_c^{1/\delta_f}}{U_c})^{\frac{1}{\alpha_c}}$. With SU transmission [resp. MU transmission to $U_c = T_c$ users] at the macrocell and applying \eqref{eq:Kcbounds}, the cellular coverage distance $D_c$ scales with $T_c$ as $D_{c,\textrm{SU}} \sim \Theta(T_c^{1/ \alpha_c})  , \
D_{c,\textrm{MU}} \sim \Theta(T_c^{-1/\alpha_c})$. This suggests that SU macrocell transmission provides coverage improvement by a factor of $T_c^{2/\alpha_c}$ (order-wise) relative to MU transmission.
\end{remark}

\subsection{Design Interpretations}
In this section, we provide design interpretations of the preceding results derived in Sections \ref{Se:SIRFemtoAnlys} and \ref{Se:SIRCellrAnlys} in realistic path loss scenarios. We shall use the system parameters given in Table \ref{Tbl:SysPrms} and the path loss model described in Section \ref{Sec:ChannelModel}.

Fig. \ref{fig:FemtoSDMA_vs_SUBF} plots the normalized no-coverage femtocell radius $D_f$ versus $P_f/P_c$ on a log-log scale. Assuming a reference femtocell placed at distance $D$ w.r.t the macrocell BS, Figs. \ref{fig:FemtoDensityVsDistance}-\ref{fig:FemtoDensityVsPathLossExponent} plot the maximum number of simultaneous femtocell transmissions given as $N_f(D) U_f$ where $N_f(D) = \pi R_c^2 \lambda_f^{\ast}(D)$, considering SU and MU femtocell transmissions and different values of $\alpha_{fo}$. Shown below are the three key observations.
\begin{asparadesc}
\item[Coverage improvement.] In the cellular-limited regime with $T_f = 2$ antennas, Fig. \ref{fig:FemtoDensityVsDistance} shows that SU transmission obtains a nearly $1.5$x reduction in the no-coverage femtocell radius $D_f$ w.r.t single antenna transmission. Next, both Figs. \ref{fig:FemtoSDMA_vs_SUBF}-\ref{fig:FemtoDensityVsDistance} show that SU transmission reduces $D_f$ by a factor of nearly $1.8$x relative to MU transmission. Both these observations agree with the predicted improvements in Corollary \ref{Co:NoCoverageFemtoDist}. This indicates that SU transmission significantly improves hotspot coverage.
\item[Dominance of cellular interference.] In Figs. \ref{fig:FemtoDensityVsDistance}-\ref{fig:FemtoDensityVsPathLossExponent}, $N_f$ increases from zero (at the no-coverage femtocell radius) to greater than $100$ femtocells per cell-site within a few meters outside the no-coverage radius.  This step-like transition from the cellular-limited to the hot-spot limited regime suggests that cross-tier cellular interference is the capacity-limiting factor even in densely populated femtocell networks and interference between femtocells is negligible because of the proximity of home users to their APs and double wall partition losses.
\item[Spatial reuse.] In the hotspot-limited regime with $\alpha_{fo} = 3.8$, SU transmission consistently outperforms MU transmission. For example, with $T_f = 2$ antennas, there is a nearly $1.7$x spatial reuse gain ($N_f U_f = 1080$ with SU transmission versus $N_f U_f = 640$ with MU transmission). In a scenario in which hotspot interference is \emph{significantly diminished} (Fig. \ref{fig:FemtoDensityVsPathLossExponent} with $\alpha_{fo} =4.8$ and $T_f = 3$ antennas), MU transmission to $U_f = 2$ hotspot users provides a marginally higher spatial reuse relative to SU transmission. The conclusion is that achieving the multiplexing benefits of MU transmission requires relative isolation (or large $\alpha_{fo}$) between actively transmitting femtocell APs.
\end{asparadesc}

Fig. \ref{fig:FemtocellTransmissions_CellularRef} plots the maximum number of transmitting femtocells $N_f = \pi R_c^2 \ \lambda_f^{\ast}(D) $ as a function of the cellular user distance $D$. With  $\left(P_c/P_f\right)_{\textrm{dB}} = 20$ and a desired $N_f = 60$ femtocells/cellsite, SU macrocell transmission provides a normalized cellular coverage radius $D_c \approx 0.35$. In contrast, the coverage provided by MU transmission is only $D_c \approx 0.13$, resulting in a coverage loss of $2.7$x relative to SU transmission (Remark \ref{Re:CellularCoverageScaling} gives an order-wise loss of $\Theta(T_c^{\delta_f})=2.07$). With SU transmission and $(P_c/P_f)_{\textrm{dB}}=0$, a cellular user at $D = 0.1$ can tolerate interference from nearly $N_f = 62$ femtocells/cellsite. In contrast, with MU transmission, $N_f$ reduces to nearly $8$ femtocells/cellsite. The observed improvement in the maximum femtocell contention density (equaling $7.75$x) is well approximated by the predicted improvement ($\mathbf{\Gamma}(1-\delta_f)T_c^{2\delta_f} = 8.04$x) in Remark \ref{Re:LambdafScalingCellular}.

The preceding observations reveal \begin{inparaenum} \item since MU performance is significantly limited by residual hotspot interference, the macrocell should maximize cellular coverage by transmitting to just a single cellular user and \item that femtocells should adapt their transmit powers depending on their location in order to ensure reliable cellular coverage\end{inparaenum}.

\section{Interference Management using Carrier-Sensing at Femtocells}
To motivate carrier-sensing at femtocells, Fig. \ref{fig:FemtoDensityVsDistance} shows that even with $(P_c/P_f)_{\textrm{dB}} = 20$, a femtocell at normalized distance $D = 0.4$ can tolerate hotspot interference from greater than $1000$ neighboring femtocells. This suggests that in dense femtocell deployments, $(P_c/P_f)_{\textrm{dB}}$ can be increased to minimize hotspot interference without violating the QoS requirement at femtocells. This section presents a carrier-sensing interference management strategy for choosing the femtocell transmission power provided there is a cellular user in its vicinity.
\begin{assumption}
\label{AS:as3}
 Each cellular user periodically transmits over a set of uplink pilot slots (time or frequency resource) with power $P_{\textrm{UT,pilot}}$ for communicating their channel information to the macrocell.
\end{assumption}
\begin{assumption}
\label{AS:as4}
Each femtocell is capable of inferring its distance from its closest macrocell BS (either through GPS or measuring the average received power of the macrocell BS transmission, with prior calibration).
\end{assumption}

During carrier-sensing, each femtocell performs energy detection while monitoring uplink pilot cellular transmissions. In the absence of a cellular user, the femtocell maintains a constant transmit power $P_f$. When the detected energy of a cellular user exceeds a threshold, the femtocell chooses its $P_f$ based on its location $D$ within the underlying macrocell.

\subsection{Minimum Required Sensing Range and Per-Tier Transmit Power Ratio Bounds}
\label{Subse:MinSensingRange}
We shall first derive the minimum required sensing distance $D_{\textrm{sense}}$ such that any transmitting femtocell located within $R < D_{\textrm{sense}}$ meters of the cellular user violates its maximum outage probability requirement. Define the notation $\mathcal{B}(D_{\textrm{sense}})$ to denote a circular region of radius $D_{\textrm{sense}}$ containing $|\mathcal{B}(D_{\textrm{sense}})|$ femtocells. Given an intensity of $\lambda_f$ femtocells per square meter and assuming the cellular user $0$ is located at normalized distance $D$ w.r.t the macrocell, its outage probability is lower bounded as
\begin{align}
\label{CellularOutageProbLB}
\mathbb{P}[\mathrm{SIR}_c(B_0,D) \leq \Gamma] &\overset{(a)}\geq \mathbb{P}[\mathrm{SIR}_c(B_0,D) \leq \Gamma, |\mathcal{B}(D_{\textrm{sense}})| > 0] \\
&\overset{(b)}= \mathbb{P}\left[\mathrm{SIR}_c(B_0,D) \leq \Gamma \Big \vert |\mathcal{B}(D_{\textrm{sense}})| > 0 \right] \cdot (1-e^{-\lambda_f \pi D_{\textrm{sense}}^2}) \notag \\
&\overset{(c)} > \mathbb{P}\left[\mathrm{SIR}_c(B_0,D) \leq \Gamma \Big \vert |\mathcal{B}(D_{\textrm{sense}})| = 1, R = D_{\textrm{sense}}\right] \cdot(1-e^{-\lambda_f \pi D_{\textrm{sense}}^2}) \notag
\end{align}
where step (a) in \eqref{CellularOutageProbLB} is a lower bound as it ignores the event of zero hotspots present within $\mathcal{B}(D_{\textrm{sense}})$. Step (b) rewrites (a) in terms of the conditional probability. Finally, step (c) is a lower bound because it considers the event that $|\mathcal{B}(D_{\textrm{sense}})| = 1$ and the hotspot is located exactly at $R = D_{\textrm{sense}}$ meters (thereby experiencing the highest path loss).
A necessary condition for ensuring $\mathbb{P}[\mathrm{SIR}_c(B_0,D) \leq \Gamma] \leq \epsilon$ is that the right hand side in step (c) in \eqref{CellularOutageProbLB} consisting of the product of two probabilities should be less than $\epsilon$. The first term represents the outage probability from interfering hotspots due to the time-varying channel powers and the second term represents the probability that $\mathcal{B}(D_{\textrm{sense}})$ is non-empty.

Assuming large $\lambda_f$ (or $1-e^{-\lambda_f \pi D_{\textrm{sense}}^2} \rightarrow 1$), a reasonable choice for selecting $D_{\textrm{sense}}$ is to set the conditional outage probability $\mathbb{P}\left[\mathrm{SIR}_c(B_0,D) \leq \Gamma \Big \vert |\mathcal{B}(D_{\textrm{sense}})| = 1, R = D_{\textrm{sense}}\right]$, given exactly one interfering femtocell AP $F_0$ at distance $D_{\textrm{sense}}$ from the cellular user, to equal $\epsilon$. Evaluating this probability,
\begin{align}
\label{eq:CondCellularPout}
\mathbb{P}\left[\frac{\frac{P_c}{U_c} A_{c} D^{-\alpha_{c}} |\mathbf{h}_0^{\dagger}\mathbf{v}_0|^2}{\frac{P_f}{U_f} A_{c,f} ||\mathbf{e}_{0}^{\dagger}\mathbf{W}_0||^2 D_{\textrm{sense}}^{-\alpha_{fo}}} \leq \Gamma\right] &= \mathbb{P}\left[\frac{|\mathbf{h}_0^{\dagger}\mathbf{v}_0|^2}{||\mathbf{e}_{0}^{\dagger}\mathbf{W}_0||^2} \leq \frac{\Gamma \mathcal{Q}_c D_{\textrm{sense}}^{-\alpha_{fo}}}{U_f} \right] \\
&\overset{(a)}= \mathbb{P}\left[Z \leq \theta \frac{U_f}{T_c - U_c + 1} \right] \notag \\
&\overset{(b)}=\mathcal{I}_{\frac{\theta}{\theta + 1}}(T_c - U_c +1 , U_f) \notag
\end{align}
where $\mathcal{Q}_c = U_c \frac{P_f}{P_c} \frac{A_{c,f}}{A_c} D^{\alpha_c}$ as before, while the terms $|\mathbf{h}_0^{\dagger}\mathbf{v}_0|^2 \sim \chi^2_{2(T_c - U_c + 1)}$ and $||\mathbf{e}_{0}^{\dagger}\mathbf{W}_0||^2 \sim \chi^2_{2U_f}$ denote the chi-squared distributed desired and interfering channel powers, as given earlier in \eqref{eq:OutagePCell1}. Step (a) in \eqref{eq:CondCellularPout} follows by defining $\theta \triangleq \mathcal{Q}_c \Gamma D_{\textrm{sense}}^{-\alpha_{fo}}/U_f$ and defining the normalized ratio $Z =\frac{|\mathbf{h}_0^{\dagger}\mathbf{v}_0|^2/(T_c-U_c+1)}{||\mathbf{e}_{0}^{\dagger}\mathbf{W}_0||^2/U_f}$ which is F-distributed \cite{Akyildiz2001}. Step (b) follows by substituting the cdf of the F-distributed r.v $Z$.  The minimum required sensing radius at $D$ is consequently given as
\begin{align}
\label{eq:MinimumHotspotSensingRange}
D_{\textrm{sense}} \geq \left[ \left(\frac{\mathcal{Q}_c \Gamma}{U_f}\right) \left(\frac{1-\mathcal{I}^{-1}(\epsilon;T_c - U_c +1, U_f)}{\mathcal{I}^{-1}(\epsilon;T_c - U_c +1, U_f)} \right)\right]^{1/ \alpha_{fo}}.
\end{align}
Using the numerical values in Table \ref{Tbl:SysPrms}, Fig. \ref{fig:FemtoSensingRangeVsPLExponents} plots $D_{\textrm{sense}}$ for different values of the path loss exponents $\alpha_c$ and $\alpha_{fo}$ as well as different cellular user locations $D$. Assuming SU transmission in both tiers, $(P_c/P_f)_{\textrm{dB}} = 20$ dB and $\alpha_c = \alpha_{fo} = 3.8$, a minimum sensing range $D_{\textrm{sense}} \approx 160$ meters is required at the cell-edge ($D = R_c$).

Next, the following lemma derives bounds on $P_c/P_f$ that satisfy the per-tier outage probability requirements at distance $D$ from the macrocell.
\begin{lemma}
\emph{Given a mean intensity of $\lambda_f$ femtocells per square meter and minimum per-tier SIR target $\Gamma$, satisfying the per-tier outage probability requirement at distance $D$ from the macrocell necessitates $(P_c/P_f)$ to be bounded as $(P_c/P_f)_{\textrm{lb}}[D] \leq P_c/P_f \leq (P_c/P_f)_{\textrm{ub}}[D]$, which are given as
\begin{align}
\label{eq:PcoverPflb}
        \left(\frac{P_c}{P_f}\right)_{\textrm{lb}}[D] &= \Gamma \left(\frac{A_{c,f}}{A_{c}}\right) \left(\frac{U_c}{D^{-\alpha_c}}\right)\left(\frac{\mathcal{C}_f \lambda_f}{\epsilon \mathcal{K}_c}\right)^{1/\delta_f} \\
\label{eq:PcoverPfub}
        \left(\frac{P_c}{P_f}\right)_{\textrm{ub}}[D] &= \left(\frac{1}{\Gamma}\right) \left(\frac{A_{fi}}{A_{f,c}}\right)\left(\frac{U_c R_f^{-\alpha_{fi}}}{U_f D^{-\alpha_c}}\right)
        \frac{\mathcal{I}^{-1}(\tilde{\epsilon};T_f-U_f+1,U_c)}{1-\mathcal{I}^{-1}(\tilde{\epsilon};T_f-U_f+1,U_c)}
\end{align}
where $\delta_f = 2/\alpha_{fo}$ as before, $\mathcal{K}_c$ is given by \eqref{eq:OptFemtoContDnstyCellr}, $\mathcal{Q}_f$ is given by \eqref{eq:DefinePQ}, $\mathcal{C}_f$ is given by \eqref{eq:DefineCf} and
\begin{align}
\tilde{\epsilon} = \frac{\epsilon-\lambda_f \mathcal{C}_f (\mathcal{Q}_f \Gamma_f)^{\delta_f}/\breve{\mathcal{K}}_{f,\textrm{max}}}{1-\lambda_f \mathcal{C}_f (\mathcal{Q}_f\Gamma_f)^{\delta_f}} \textrm{ and } \breve{\mathcal{K}}_{f,\textrm{max}} = (T_f-U_f+1)^{\delta_f} \mathbf{\Gamma}(1-\delta_f).
\end{align}
}
\end{lemma}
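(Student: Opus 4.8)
The plan is to treat the two bounds independently, since the lower bound is dictated entirely by the cellular user's QoS constraint while the upper bound is dictated by the femtocell user's constraint; the two together define the admissible window for $P_c/P_f$.

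For the lower bound I would start from the cellular maximum contention density \eqref{eq:OptFemtoContDnstyCellr}. Feasibility of the cellular outage requirement at a mean intensity $\lambda_f$ demands $\lambda_f \leq \lambda_f^{\ast}(D) = \epsilon\mathcal{K}_c/[\mathcal{C}_f(\mathcal{Q}_c\Gamma)^{\delta_f}]$. Since $\mathcal{Q}_c = U_c (P_f/P_c)(A_{c,f}/A_c)D^{\alpha_c}$ is strictly decreasing in $P_c/P_f$, I would invert this inequality by solving first for $\mathcal{Q}_c$ and then for $P_c/P_f$, which reproduces \eqref{eq:PcoverPflb} directly. This step is routine algebra and requires no new machinery.

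For the upper bound the idea is to partition the femtocell outage budget $\epsilon$ in \eqref{eq:SuccessP} into a hotspot-interference part and a cellular-interference part. I would exploit the independence of the cross-tier term $\|\mathbf{f}_0^{\dagger}\mathbf{V}\|^2$ and the aggregate hotspot interference from $\Pi_f$ so that the reception probability factorizes (exactly when the desired channel power is exponential, i.e. $T_f-U_f+1=1$, and approximately otherwise) into a cellular success factor and a hotspot success factor. Writing $1-\epsilon \leq (1-P_{\mathrm{out}}^{(h)})(1-P_{\mathrm{out}}^{(c)})$ and solving for the admissible cellular outage yields $P_{\mathrm{out}}^{(c)} \leq (\epsilon - P_{\mathrm{out}}^{(h)})/(1-P_{\mathrm{out}}^{(h)})$, which is precisely the structure of $\tilde{\epsilon}$. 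To make this concrete I would quantify $P_{\mathrm{out}}^{(h)}$ from the ad hoc (hotspot-limited) result: inverting the limiting density \eqref{eq:LimitingFemtoDensity} gives $P_{\mathrm{out}}^{(h)} = \lambda_f \mathcal{C}_f (\mathcal{Q}_f\Gamma)^{\delta_f}/\breve{\mathcal{K}}_f$, and applying the sandwich $(T_f-U_f+1)^{\delta_f} \leq \breve{\mathcal{K}}_f \leq \breve{\mathcal{K}}_{f,\mathrm{max}}$ brackets this outage. Substituting $\breve{\mathcal{K}}_{f,\mathrm{max}}$ for the numerator surrogate and $\breve{\mathcal{K}}_f \geq 1$ for the denominator surrogate recovers the stated $\tilde{\epsilon}$, yielding a valid outer (necessary) bound. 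With the cellular portion of the femtocell outage capped at $\tilde{\epsilon}$, I would then reuse the no-coverage-radius argument of Theorem \ref{Th:NoCoverageFemtoDist} verbatim but with $\tilde{\epsilon}$ in place of $\epsilon$: the condition $\mathcal{I}_{\kappa/(\kappa+1)}(T_f-U_f+1,U_c) \leq \tilde{\epsilon}$ bounds $\kappa \leq \mathcal{I}^{-1}(\tilde{\epsilon};T_f-U_f+1,U_c)/[1-\mathcal{I}^{-1}(\tilde{\epsilon};T_f-U_f+1,U_c)]$. Finally, solving the definition of $\kappa$ in \eqref{eq:Definek} for $P_c/P_f$ converts this into \eqref{eq:PcoverPfub}.

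The main obstacle is the decoupling step. The success-probability factorization is exact only for an exponential desired signal; for general $T_f-U_f+1>1$ the chi-squared desired power statistically couples the cellular and hotspot interference terms, so the product form is only an approximation. I expect the rigor to rest entirely on the $\breve{\mathcal{K}}_f$ sandwich bound inherited from \cite{Hunter2008}, which is exactly why the numerator and denominator of $\tilde{\epsilon}$ employ different surrogates for the hotspot outage: this bracketing absorbs the coupling and guarantees that the resulting $(P_c/P_f)_{\mathrm{ub}}[D]$ is a genuine necessary bound rather than an artifact of the exponential special case.
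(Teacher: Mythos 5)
Your proposal is correct and its lower-bound half coincides exactly with the paper's proof: invert the cellular contention density \eqref{eq:OptFemtoContDnstyCellr} through $\mathcal{Q}_c$ to isolate $P_c/P_f$. For the upper bound you land on the right formula, but by a more roundabout route than the paper. The paper does not re-derive the outage split from a product factorization of the success probability; it simply takes the already-established femtocell contention density \eqref{eq:OptimalFemtoDensity} from Theorem \ref{Th:OptimalFemtoDensity}, imposes $\lambda_f \leq \lambda_f^{\ast}(D)$, and solves for $\mathcal{I}_{\kappa/(\kappa+1)}(T_f-U_f+1,U_c)$. Writing $L \triangleq \lambda_f\mathcal{C}_f(\mathcal{Q}_f\Gamma)^{\delta_f}$, that algebra directly gives $\mathcal{I}_{\kappa/(\kappa+1)}(T_f-U_f+1,U_c) \leq (\epsilon - L/\mathcal{K}_f)/(1-L)$ --- note the denominator is already $1-L$, so no separate ``denominator surrogate'' is needed; the only relaxation is $\mathcal{K}_f \leq \breve{\mathcal{K}}_{f,\textrm{max}}$ in the numerator, which (since the right-hand side is increasing in $\mathcal{K}_f$) preserves necessity and yields $\tilde{\epsilon}$ verbatim. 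From there one bounds $\kappa$ via Definition \ref{defn:InvBetacdf} and unwinds \eqref{eq:Definek}, exactly as you do. The advantage of the paper's route is that the statistical coupling you worry about --- the chi-squared desired power tying together the cellular and hotspot interference terms --- is already handled inside Theorem \ref{Th:OptimalFemtoDensity} via the Leibniz expansion of the product of Laplace transforms, and it is precisely what produces the $\kappa$-dependence of $\mathcal{K}_f$. Your factorization $1-\epsilon \leq (1-P_{\mathrm{out}}^{(h)})(1-P_{\mathrm{out}}^{(c)})$ is, as you concede, only exact for $T_f-U_f+1=1$, and the assertion that the $\breve{\mathcal{K}}_f$ sandwich ``absorbs the coupling'' is not itself a proof; it happens to be harmless here only because the result it produces coincides with the direct inversion. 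If you replace the heuristic decoupling step by the one-line inversion of \eqref{eq:OptimalFemtoDensity}, your argument becomes both shorter and airtight.
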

\vspace{2mm}
\begin{proof}
A lower limit on $P_c/P_f$ is obtained by computing the minimum $P_c$ required to satisfy the outage probability requirement for a cellular user at distance $D$ w.r.t $B_0$.  Combining \eqref{eq:OptFemtoContDnstyCellr} and \eqref{eq:Kcbounds} yields $(P_c/P_f)_{\textrm{lb}}$ in \eqref{eq:PcoverPflb}. Conversely, given a femtocell user at distance $D$ w.r.t $B_0$, an upper limit for $P_c/P_f$ is obtained by computing the minimum required $P_f$ for satisfying $\mathbb{P}[\textrm{SIR}(F_0,D) \leq \Gamma] \leq \epsilon$. Substituting the upper bound for $\mathcal{K}_f$ and inverting \eqref{eq:OptimalFemtoDensity} to compute $(P_c/P_f)_\textrm{ub}$ yields \eqref{eq:PcoverPfub}.
\end{proof}

Inspecting \eqref{eq:PcoverPflb} and \eqref{eq:PcoverPfub} reveals that the difference between the decibel upper and lower bounds is constant for all $D$.  Fig. \ref{fig:TwoTier_PcoverPfBounds} plots $(P_c/P_f)_\textrm{lb}$ and $(P_c/P_f)_\textrm{ub}$ for different normalized $D$. At a cell-edge ($D = 1$) location, the bounds on the required $(P_c/P_f)_\textrm{dB}$ are given as $40 \leq (P_c/P_f)_\textrm{dB} \leq 55$ dB.


\subsection{Energy Detection based Carrier-Sensing of Cellular Users}
Assume that each femtocell monitors a set of pilot slots (we assume time-slotted transmission in the subsequent discussion) and employs energy detection\cite{Urkowitz1967}. We briefly describe the sensing procedure below and refer to \cite{Urkowitz1967,Digham2003,Ghasemi2007,Liang2008} for further details.

Let $T$ denote the sensing time (number of sensing time slots times the slot duration) and $W$ designate the sensed bandwidth. Given a received signal $x(t)$ in the pilot slots and $n(t)$ being complex Gaussian noise process with power $N_0 W/2$ per complex dimension, define the following hypotheses namely \begin{inparaenum} \item $\mathcal{H}_0$ : Absence of cellular user [$x(t) = n(t)$] and \item $\mathcal{H}_1$ : Presence of an active cellular user [$x(t) = h s(t) + n(t)$]. \end{inparaenum}
The femtocell compares the energy detector output $Y = 2/N_0 \int_{0}^{T}|x(t)|^2 \textrm{ d}t$ against a threshold $\lambda$ for inferring the presence (or absence) of a cellular user. Define $m$ to equal the time-bandwidth product $TW$ (assumed to be an integer).
The average sensed pilot Signal-to-Noise Ratio (SNR) at the femtocell is given as
\begin{align}
\label{eq:AveragePilotSNR}
\overline{\gamma} = \frac{P_{\textrm{UT,pilot}}D^{-\alpha_c}A_{f,c}}{N_0 W}\textrm{, where }(N_0 W)_{\textrm{dB}} = P_{c,\textrm{dB}} - A_{c,\textrm{dB}} - 10 \alpha_c \log_{10}(R_c) - \overline{\gamma}_{\textrm{edge,dB}}
\end{align}
where $D$ denotes the distance of the cellular user from the femtocell. The noise power $N_0 W$ is chosen with reference to a cell-edge user obtaining an average downlink SNR $\overline{\gamma}_{\textrm{edge}} > \Gamma$. Assuming Selection Combining (SC) is used at the $T_f$ available diversity branches for choosing the maximum SNR branch, the detection probability $P_{\textrm{detect,SC}}$ and the false-alarm probability $P_{\textrm{false}}$ are respectively given as \cite{Digham2003,Ghasemi2007}
\begin{align}
\label{eq:SensingPdPf}
 P_{\textrm{detect,SC}} = \mathbb{P}[Y > \lambda | \mathcal{H}_1 ] &= T_f \cdot \sum_{i=0}^{T_f-1} \frac{(-1)^{i}}{i+1} \binom{T_f-1}{i}P_{d,\textrm{Ray}}\left(\frac{\overline{\gamma}}{i + 1}\right), P_{\textrm{false}} = \mathbb{P}[Y > \lambda | \mathcal{H}_0 ] = \frac{\mathbf{\Gamma}(2m,\lambda)}{\mathbf{\Gamma}(2m)} \notag \\
\textrm{where } P_{\textrm{detect,Ray}}(\overline{\gamma})&=\frac{\mathbf{\Gamma}(2m-1,\lambda)}{\mathbf{\Gamma}(2m-1)}+e^{-\frac{\lambda}{(1+m\overline{\gamma})}}\left(1+\frac{1}{m\overline{\gamma}}\right)^{2m-1}
\left[1-\frac{\mathbf{\Gamma}(2m-1,\frac{\lambda m \overline{\gamma}}{(1+\lambda m \overline{\gamma})})}{\mathbf{\Gamma}(2m-1)}\right].
\end{align}
Here, $\mathbf{\Gamma}(a,x) = \int_{x}^{\infty} t^{a-1}e^{-t} \textrm{d}t$ is the upper incomplete gamma function.
Because of the complex baseband signal model, there is a factor of $2$ discrepancy in \eqref{eq:SensingPdPf} with respect to \cite{Ghasemi2007}. Fig. \ref{fig:FemtoSensingRangeVsTW} plots the maximum femtocell sensing range $D_{\textrm{sense}}$ versus different values of the time-bandwidth product $m$.
For example, with $P_{\textrm{UT,pilot}} = 20$ dBm ($3$ dB below the maximum UT transmit power), and probabilities $P_{\textrm{detect}} = 0.9$ and $P_{\textrm{false}} = 0.1$ respectively, obtaining a sensing range of $D_{\textrm{sense}} = 230$ meters requires a minimum time-bandwidth product $m = 500$.


\section{Numerical Results}
This section reports the results of computer simulations using the system parameters in Table \ref{Tbl:SysPrms}. The simulation consisted of $1000$ different random drops of femtocell hotspots with $1000$ trials per drop to simulate Rayleigh fading. Additive white Gaussian noise power was chosen to obtain an average cell-edge SNR of $\overline{\gamma}_{\textrm{edge, dB}} = 12$ dB. Single-user transmission is assumed in either tier considering its superior coverage and spatial reuse performance. With an average of $N_f = 60$ femtocells per cell-site, we evaluate whether the $10$ percentile outage capacity ($\epsilon = 0.1$) satisfies the minimum required per-tier spectral efficiency $\log_2(1+\Gamma)$ b/s/Hz (or nearly $2.06$ b/s/Hz for $\Gamma_{\textrm{dB}} = 5$).

During carrier-sensing, each femtocell can detect active cellular users within a sensing radius equaling $230$ meters (determined using computer simulations), which exceeds the minimum required sensing range of $D_{\textrm{sense}} = 160$ meters obtained in Section \ref{Subse:MinSensingRange}. We consider both a fixed $P_c/P_f$ (without carrier-sensing or power control at femtocells) and a location based selection of $P_c/P_f$ (wherein femocells adjust their $P_f$ upon sensing a cellular user). Under ambient conditions (no detected cellular user), a fixed $(P_c/P_f)_{\textrm{dB}} = 20$ dB is chosen. Upon sensing a cellular user, a femtocell chooses its $P_f$ such that $(P_c/P_f)[D,\textrm{dB}] = 0.7(P_c/P_f)_{\textrm{ub}}[D,\textrm{dB}]+ 0.3(P_c/P_f)_{\textrm{lb}}[D,\textrm{dB}]$, which are given in \eqref{eq:PcoverPflb}-\eqref{eq:PcoverPfub}.
Two scenarios are considered namely
\begin{asparadesc}
\item[Reference Cellular User.] A cellular user is placed at normalized distances ($D = 0.8$ and $D = 1.0$) w.r.t the macrocell. The cdfs of the achievable cellular data rates have been reported.
\item[Reference Hotspot.] A reference hotspot is placed at normalized distances ($D = 0.11,0.4,0.6,0.8$ and $0.9$ respectively) from the macrocell. A reference cellular user is co-linearly placed at a distance $D_{\textrm{sense}}/2$ w.r.t the hotspot. The conditional cdfs of the hotspot data rates (assuming idealized sensing) have been reported.
\end{asparadesc}

Fig. \ref{fig:CellularRateCDFPlots} shows the cdfs of the obtained cellular data rates for $N_f = 60$ femtocells/cell-site. Without carrier-sensing, the $10$ percentile outage capacities are below $0.5$ b/s/Hz. By employing carrier-sensing, the $10$ percentile outage capacities (corresponding to $\epsilon = 0.1$) equal $3.21$ b/s/Hz and $2.22$ b/s/Hz for cellular user locations of $D = 0.8$ and $D = 1.0$ respectively. Thus, our scheme ensures uniform cell-edge coverage with large numbers of femtocells.

Fig. \ref{fig:FemtocellRateCDFPlots} shows the cdfs of the obtained femtocell data rates with carrier-sensing and transmit power control at femtocell APs. The lowest outage capacity is obtained when the femtocell is located close to the no-coverage zone ($D = 0.11$). Because the location-based power control scheme monotonically decreases transmit power with femtocell distance from $B_0$, the outage capacities monotonically decrease with $D$. The $10$ percentile outage capacities are respectively equal to $2.15,3.63,3.56,3.32$ and $3.22$ b/s/Hz for the different femtocell locations given above which exceeds the minimum desired target spectral efficiency of $2.06$ b/s/Hz.

\section{Conclusions}
In two-tier cellular systems with universal frequency reuse, cross-tier interference will likely be the main obstacle preventing uniform coverage. This paper has derived analytical expressions for the coverage zones in such a tiered architecture with spatial diversity considering the number of antennas, the maximum tolerable outage probability accounting for path loss and Rayleigh fading. Single-user transmission in either tier is analytically shown to provide significantly superior coverage and spatial reuse while performance of multiple-user transmission suffers from residual cross-tier interference. For providing uniform cellular coverage, we have proposed a location-assisted power control scheme for regulating femtocell transmit powers. This scheme is fully decentralized and provides uniform cellular and hotspot coverage on the cell-edge, as opposed to randomized hotspot transmissions without carrier-sensing. These results motivate deploying closed-access tiered cellular architectures while requiring minimal network overhead.

\appendices
\section{Proof of Theorem \ref{Th:NoCoverageFemtoDist}}
\label{Pf:Theorem1}
The probability of successful reception in \eqref{eq:SuccessP} can be upper bounded as
\begin{align}
\label{eq:OutagePUB}
\mathbb{P}{[\mathrm{SIR}_f(F_0,D) \geq \Gamma]} \leq \mathbb{P}{\left [|\mathbf{g}_0^{\dagger} \mathbf{w}_{0,0}|^2 \geq \Gamma \mathcal{Q}_f \left(\frac{\mathcal{P}_f}{U_c}||\mathbf{f}_0^{\dagger}\mathbf{V}||^2 \right) \right]} = \mathbb{P}{\left [|\mathbf{g}_0^{\dagger} \mathbf{w}_{0,0}|^2 \geq \kappa ||\mathbf{f}_0^{\dagger}\mathbf{V}||^2  \right]}.
\end{align}
The term $|\mathbf{g}_0^{\dagger} \mathbf{w}_{0,0}|^2$ is distributed as a chi-squared random variable (r.v) $X$ with $2(T_f-U_f+1)$ degrees of freedom denoted as $\chi^2_{2(T_f-U_f+1)}$. To prove this claim, whenever $U_f = 1$ (beamforming to a single user), $\mathbf{w}_{0,0} = \frac{\mathbf{g}_0}{\vectornorm{\mathbf{g}_0}}$, therefore, $|\mathbf{g}_0^{\dagger} \mathbf{w}_{0,0}|^2$ is distributed as $\chi^2_{2T_f}$. When $1<U_f \leq T_f$, $|\mathbf{g}_0^{\dagger} \mathbf{w}_{0,0}|^2 = |\frac{\mathbf{g}_0^{\dagger}}{\vectornorm{\mathbf{g}_0}} \mathbf{w}_{0,0}|^2 \cdot \vectornorm{\mathbf{g}_0}^2$ which equals the product of two independent r.v's which are distributed as $\textrm{Beta}(T_f-U_f+1,U_f-1)$ (see \cite[Theorem 1.1]{Frankl1990} for proof) and $\chi^2_{2T_f}$ respectively. In \cite[Pages 169-170]{ChandrasekharPhD2009}, it is shown that $|\mathbf{g}_0^{\dagger} \mathbf{w}_{0,0}|^2$ is distributed as a $\chi^2_{2(T_f-U_f+1)}$ r.v. The probability density function of $|\mathbf{g}_0^{\dagger} \mathbf{w}_{0,0}|^2$ is given as $f_{|\mathbf{g}_0^{\dagger} \mathbf{w}_{0,0}|^2}(x) = x^{T_f-U_f}e^{-x}/\mathbf{\Gamma}(T_f-U_f+1)\ \forall x \geq 0$ where $\mathbf{\Gamma}(k) = (k-1)!$ for any positive integer $k$. Similarly, the r.v $||\mathbf{f}_{0}^{\dagger}\mathbf{V}||^2 = \sum_{k=0}^{U_c-1}|\mathbf{f}_{0}^{\dagger}\mathbf{v}_k|^2$ is the sum of $U_c$ r.v's, wherein each term $|\mathbf{f}_{0}^{\dagger}\mathbf{v}_k|^2$ equals the squared modulus of a linear combination of $T_c$ complex normal r.v's, which is exponentially distributed. Consequently, $||\mathbf{h}_{0,c}^{\dagger}\mathbf{V}||^2$ is distributed as a $\chi^2_{2U_c}$ r.v.

Define $Z = \frac{|\mathbf{g}_0^{\dagger} \mathbf{w}_{0,0}|^2}{||\mathbf{f}_{0}^{\dagger}\mathbf{V}||^2}$. Then $Z$ is the ratio of two independent $\chi^2$ r.v's with $2(T_f-U_f+1)$ and $2U_c$ degrees of freedom respectively. Therefore, $Z$ follows a canonical $F_c$-distribution\cite{Akyildiz2001} and $\frac{Z \ U_c}{(T_f-U_f+1)}$ is an F-distributed r.v with parameters $2(T_f-U_f+1)$ and $2U_c$ respectively. Substituting $\kappa$ in \eqref{eq:Definek} and taking the complement of \eqref{eq:OutagePUB}, one obtains
\begin{align}
\mathbb{P}{[\mathrm{SIR}_f(F_0,D) \leq \Gamma]} \geq \mathbb{P}{[Z \leq \kappa]}
                                      = \mathbb{P}{\left[\frac{ Z \ U_c}{T_f-U_f+1} \leq \kappa \frac{U_c}{T_f-U_f+1}\right]}
                                      = \mathcal{I}_{\frac{\kappa}{\kappa+1}}(T_f-U_f+1,U_c). \notag
\end{align}
A necessary condition for meeting the QoS requirement $\epsilon$ for indoor users served by $F_0$ is given as
\begin{align}
\mathcal{I}_{\frac{\kappa}{\kappa+1}}(T_f-U_f+1,U_c) \leq \epsilon
\Rightarrow \kappa^{\ast} = \frac{\mathcal{I}^{-1}(\epsilon;T_f-U_f+1,U_c)}{1-\mathcal{I}^{-1}(\epsilon; T_f-U_f+1,U_c)}. \notag
\end{align}
Substituting the definition of $\kappa$ in \eqref{eq:Definek}, one obtains $D_f$. This completes the proof.

\section{}
\label{Pf:Theorem2}
Using \eqref{eq:SuccessP}, the probability of successful reception $\mathbb{P}{[\mathrm{SIR}_f(F_0,D) \geq \Gamma]}$ is given as
\begin{align}
\mathbb{P}{[|\mathbf{g}_0^{\dagger} \mathbf{w}_{0,0}|^2 \geq \Gamma \mathcal{Q}_f(I_{f,c}+I_{f,f})]} \textrm{, where } I_{f,c}=\frac{\mathcal{P}_f}{U_c}||\mathbf{f}_{0}^{\dagger}\mathbf{V}||^2, \ I_{f,f}=1/U_f \sum_{F_j \in \Pi_f \setminus F_0} ||\mathbf{g}_{0,j}^{\dagger}\mathbf{W}_j||^2 |X_{0,j}|^{-\alpha_{fo}}. \notag
\end{align}
The interference from neighboring femtocells $I_{f,f}$ is a Poisson Shot-noise Process (\cite{Lowen1990,Sousa1990}) with independent and identically distributed marks\cite{Kingman}.
The distributions of the signal powers and marks of the interferers are chi-squared with degrees of freedom given as $|\mathbf{g}_0^{\dagger} \mathbf{w}_{0,0}|^2 \sim \chi^2_{2(T_f-U_f+1)}, \ ||\mathbf{f}_{0}^{\dagger}\mathbf{V}||^2 \sim \chi^2_{2U_c}$ and $||\mathbf{g}_{0,j}^{\dagger}\mathbf{W}_j||^2 \sim \chi^2_{2U_f}$ respectively. Consequently,
\begin{align}
\label{eq:OutageP3}
\mathbb{P}{[\mathrm{SIR}_f(F_0,D) \geq \Gamma]}  &\overset{(a)}= \int_{0}^{\infty} \sum_{k=0}^{T_f-U_f} \frac{(s \mathcal{Q}_f\Gamma)^{k}}{k!} e^{-s\mathcal{Q}_f\Gamma} \textrm{d} \mathbb{P}{(I_{f,c}+I_{f,f} \leq s)} \\
\label{eq:OutageP3b}
                                       &\overset{(b)}=\sum_{k=0}^{T_f-U_f}\frac{(-\mathcal{Q}_f\Gamma)^{k}}{k!} \frac{\textrm{d}^k}{\textrm{d}\theta^k}\mathcal{L}_{I_{f,c}}(\theta)\mathcal{L}_{I_{f,f}}(\theta) \Big\rvert_{\theta = \mathcal{Q}_f\Gamma}
\end{align}
where step $(a)$ follows by conditioning on $I_{f,c}+I_{f,f}$ and computing the complementary cumulative distribution (ccdf) of $||\mathbf{g}_0||^2$. For deriving step $(b)$, with $k=0$, the integral in $(a)$ corresponds to the Laplace Transform (LT) of the r.v $I_{f,f}+I_{f,c}$ given as $\mathbb{E}[e^{-(I_{f,c}+I_{f,f})\theta}]$ evaluated at $\theta = \mathcal{Q}_f\Gamma$ (originally derived in \cite{Baccelli2006}). Next, since $I_{f,c}$ and $I_{f,f}$ are independent r.v's, the LT of their sum decouples as the product of their LTs $\mathbb{E}[e^{-I_{f,c}\theta}] \mathbb{E}[e^{-I_{f,f}\theta}]$. Finally, for any $k>0$, we have the identity $\mathcal{L}[t^k f(t)] = (-1)^{k}F^{(k)}(s)$, where $F^{k}(s)$ represents the $k$th derivative of $F(s)$ (this technique is borrowed from \cite{Hunter2008}).

The LTs of $I_{f,c}$ and $I_{f,f}$ may be written as
\begin{align}
\label{eq:LTIFC}
\mathcal{L}_{I_{f,c}}(\theta) = \mathbb{E}[e^{-\theta I_{f,c}}] =\frac{1}{(1+\mathcal{P}_f\theta/U_c)^{U_c}}
\end{align}
\begin{align}
\label{eq:LTIFF}
\mathcal{L}_{I_{f,f}}(\theta) = \mathbb{E}[e^{-\theta I_{f,f}}] &\overset{(a)}= \exp \left \lbrace -\lambda_f \int_{\mathbb{R}^2} 1-\mathbb{E}_{S}[e^{-\theta \frac{S}{U_f} |x|^{-\alpha_{fo}}}]  \ \textrm{d}x \right \rbrace \notag , \textrm{where } S \sim \chi^2_{2U_f}\notag \\
 &\overset{(b)}=\exp \left \lbrace -\pi \lambda_f \delta_f \left(\frac{\theta}{U_f}\right)^{\delta_f} \sum_{k=0}^{U_f-1} \binom{U_f}{k}B(k+\delta_f,U_f-k-\delta_f) \right \rbrace \notag \\
 &\overset{(c)}= \exp(-\lambda_f \mathcal{C}_f \theta^{\delta_f})
\end{align}
where \eqref{eq:LTIFC} follows from the LT of a chi-squared r.v with $2U_c$ degrees of freedom. In \eqref{eq:LTIFF}, step $(a)$ represents the LT of a Poisson Shot-Noise process with independent and identically distributed marks $S_j$
 -- equaling $||\mathbf{g}_{0,j}^{\dagger}\mathbf{W}_j||^2$ in our case.
Defining $\delta_f \triangleq \frac{2}{\alpha_{fo}}$, steps $(b)$ and $(c)$  follow from \cite{Hunter2008}. Substituting \eqref{eq:LTIFC} and  \eqref{eq:LTIFF} in \eqref{eq:OutageP3b} leads to the following requirement for the success probability
\begin{align}
\label{eq:OutageP4}
\sum_{k=0}^{T_f-U_f} \frac{(-\mathcal{Q}_f\Gamma)^{k}}{k!} \frac{\textrm{d}^{k}}{\textrm{d}\theta^{k}} \frac{e^{-\lambda_f \mathcal{C}_f \theta^{\delta_f}}}{(1+\frac{\mathcal{P}_f\theta}{U_c})^{U_c}} \geq 1 - \epsilon, \textrm{where } \theta = \mathcal{Q}_f \Gamma.
\end{align}
Using the Leibniz rule, the $k$th derivative of $\mathcal{L}_{I_{f,c}}(\theta)\mathcal{L}_{I_{f,f}}(\theta)$ is given as
\begin{align}
 \frac{\textrm{d}^k}{\textrm{d}\theta^k}\frac{e^{-\lambda_f \mathcal{C}_f \theta^{\delta_f}}}{(1+\frac{\mathcal{P}_f\theta}{U_c})^{U_c}}
  = \sum_{j=0}^{k} \binom{k}{j}\frac{\textrm{d}^j}{\textrm{d}\theta^j}\left(1+\frac{\mathcal{P}_f\theta}{U_c}\right)^{-U_c}
  \frac{\textrm{d}^{(k-j)}}{\textrm{d}\theta^{(k-j)}}e^{-\lambda_f \mathcal{C}_f \theta^{\delta_f}}.
\end{align}
where $\binom{a}{b}$ is the coefficient of $x^b$ in the expansion of $(1+x)^a$. Considering the low outage regime, we shall evaluate the $k$th derivative of $\mathcal{L}_{I_{f,f}}(\theta)$ using a first-order Taylor series approximation around $\lambda_f \mathcal{C}_f \theta^{\delta_f} = 0$.
Then, for all $k \geq 1$, the $k$th derivatives of $\mathcal{L}_{I_{f,c}}(\theta)$ and $\mathcal{L}_{I_{f,f}}(\theta)$ are individually given as
\begin{gather}
\label{eq:KthDrvLTa}
\frac{\textrm{d}^{k}}{\textrm{d}\theta^{k}}\left(1+\frac{\mathcal{P}_f\theta}{U_c}\right)^{-U_c} = \frac{ \left \lbrack \prod_{j=0}^{k-1} (U_c+j) \right \rbrack \left(\frac{\mathcal{-P}_f}{U_c}\right)^{k}}{(1+\frac{\mathcal{P}_f\theta}{U_c})^{k+U_c}}. \\
\label{eq:KthDrvLTb}
\frac{\textrm{d}^{k}}{\textrm{d}\theta^{k}}e^{-\lambda_f \mathcal{C}_f \theta^{\delta_f}} = -\left \lbrack \lambda_f \mathcal{C}_f  \prod_{m = 0}^{k-1} (\delta_f - m) \theta^{\delta_f - k} \right \rbrack e^{-\lambda_f \mathcal{C}_f \theta^{\delta_f}}+ \Theta(\lambda_f^2 \mathcal{C}_f^2 \theta^{2\delta_f}).
\end{gather}
Combining \eqref{eq:OutageP4} with \eqref{eq:KthDrvLTa} and \eqref{eq:KthDrvLTb} and substituting $\theta = \mathcal{Q}_f\Gamma$ leads to
\begin{eqnarray*}
& &\frac{e^{-\lambda_f \mathcal{C}_f (\mathcal{Q}_f\Gamma)^{\delta_f}}}{(1+\frac{\mathcal{P}_f\mathcal{Q}_f\Gamma}{U_c})^{U_c}}\Biggl \lbrace
\sum_{k=0}^{(T_f-U_f)} \frac{(\frac{\mathcal{P}_f\mathcal{Q}_f\Gamma}{U_c})^k}{k!\left(1+\frac{\mathcal{P}_f\mathcal{Q}_f\Gamma}{U_c}\right)^{k}}\prod_{m=0}^{(k-1)}(U_c+m) \\
&- & \lambda_f \mathcal{C}_f (\mathcal{Q}_f\Gamma)^{\delta_f} \sum_{k=1}^{(T_f-U_f)} \frac{(-1)^{k}}{k!}\sum_{j=1}^{k}\binom{k}{j} \left\lbrack \frac{\frac{-\mathcal{P}_f\mathcal{Q}_f\Gamma}{U_c}}{1+\frac{\mathcal{P}_f\mathcal{Q}_f\Gamma}{U_c}} \right\rbrack^{k-j}
\prod_{n=0}^{(k-j-1)}(U_c+n) \prod_{m=0}^{j-1}(\delta_f-m) \Biggr \rbrace \\
&+& \Theta(\lambda_f^2 \mathcal{C}_f^2 (\mathcal{Q}_f \Gamma)^{2\delta_f}) \geq 1-\epsilon.
\end{eqnarray*}
Next, substituting $\kappa = \mathcal{P}_f\mathcal{Q}_f\Gamma/U_c$ from \eqref{eq:Definek} and performing a first-order Taylor series expansion of $e^{-\lambda_f \mathcal{C}_f (\mathcal{Q}_f\Gamma)^{\delta_f}} = 1-\lambda_f C_f (\mathcal{Q}_f\Gamma)^{\delta_f}+\Theta(\lambda_f^2\mathcal{C}_f^2 (\mathcal{Q}_f\Gamma)^{2\delta_f})$, the above expression simplifies as
\begin{eqnarray}
\label{eq:OutageP5}
& &\frac{1-\lambda_f \mathcal{C}_f (\mathcal{Q}_f\Gamma)^{\delta_f}}{(1+\kappa)^{U_c}}\Biggl \lbrace
\sum_{k=0}^{T_f-U_f} \frac{1}{k!}\left(\frac{\kappa}{\kappa+1}\right)^{k}\prod_{m=0}^{k-1}(U_c+m) \notag \\
&- & \lambda_f \mathcal{C}_f (\mathcal{Q}_f\Gamma)^{\delta_f} \sum_{k=1}^{T_f-U_f} \frac{1}{k!}\sum_{j=1}^{k}\binom{k}{j} \left\lbrack \frac{\kappa}{\kappa+1} \right\rbrack^{k-j}
\prod_{n=0}^{(k-j-1)}(U_c+n) \prod_{m=0}^{j-1}(m-\delta_f) \Biggr \rbrace \notag \\
&+& \Theta(\lambda_f^2 \mathcal{C}_f^2 (\mathcal{Q}_f \Gamma)^{2\delta_f}) \geq 1-\epsilon.
\end{eqnarray}
Note that $\prod_{m=0}^{k-1}(U_c+m)/k! = \binom{U_c+k-1}{k}$. Using Proposition \ref{Pr:IncompleteBeta} , we have the identity
\begin{align}
\label{eq:SimpId1}
\sum_{k=0}^{(T_f-U_f)}\left(\frac{\kappa}{\kappa+1}\right)^{k}\binom{U_c+k-1}{k} = (1+  \kappa)^{U_c}[1-\mathcal{I}_{\frac{\kappa}{\kappa+1}}(T_f-U_f+1;U_c)].
\end{align}
With straightforward algebraic manipulation, it is easily shown that
\begin{align}
\label{eq:SimpId2}
\sum_{k=1}^{(T_f-U_f)} \frac{1}{k!}\sum_{j=1}^{k}\binom{k}{j} \left\lbrack \frac{\kappa}{\kappa+1} \right\rbrack^{k-j}
\prod_{n=0}^{k-j-1}(U_c+n) \prod_{m=0}^{j-1}(m-\delta_f) \notag \\
= \sum_{j =0}^{T_f-U_f-1}\left( \frac{\kappa}{\kappa+1} \right)^{j} \binom{U_c+j-1}{j} \sum_{l=1}^{T_f-U_f-j} \frac{1}{l!}\prod_{m=0}^{(l-1)}(m-\delta_f).
\end{align}
Using \eqref{eq:SimpId2}, we now define $\mathcal{K}_f$ in \eqref{eq:DefineKf},
where $\mathcal{K}_f =1$ whenever $U_f = T_f$ (since \eqref{eq:OutageP4} does not contain derivative terms).
By substituting \eqref{eq:SimpId1} and $\mathcal{K}_f$ in \eqref{eq:OutageP5} and discarding (for small $\lambda_f$) the $\Theta(\lambda_f^2 \mathcal{C}_f^2 (\mathcal{Q}_f \Gamma)^{2\delta_f})$ terms (which are $\mathrm{o}(\lambda_f \mathcal{C}_f (\mathcal{Q}_f \Gamma)^{\delta_f})$), the upper bound on $\lambda_f$ is given as
\begin{align}
\label{eq:OptContDnsty}
\lambda_f \leq \frac{1}{\mathcal{C}_f (\mathcal{Q}_f\Gamma)^{\delta_f}}\frac{\epsilon-\mathcal{I}_{\frac{\kappa}{\kappa+1}}(T_f-U_f+1,U_c)}
{\frac{1}{\mathcal{K}_f}-\mathcal{I}_{\frac{\kappa}{\kappa+1}}(T_f-U_f+1,U_c)}.
\end{align}
Since the \emph{maximum contention density} $\lambda_f^{\ast}$ maximizes the number of simultaneous femtocell transmissions, $\lambda_f^{\ast}$ satisfies \eqref{eq:OptContDnsty} with equality. This completes the proof.

\section{}
\begin{proposition}
\label{Pr:IncompleteBeta}
\emph{For any $x \geq 0$, and non-negative integers $n,m$ and $r$ where $n \geq r$,
\begin{align}
\label{eq:IncompleteBeta}
\sum_{k=0}^{n-r} \left(\frac{x}{x+1}\right)^{k} \binom{m+k-1}{k} = (1+x)^{m}\ \mathcal{I}_{\frac{1}{x+1}}(m,n-r+1).
\end{align}}
\end{proposition}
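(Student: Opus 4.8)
The plan is to reuse the finite-series representation of the regularized incomplete beta function already invoked in the proof of Proposition~\ref{Pr:Incompletebetamonotonicity}. Relation (b) there states that for positive integer parameters $a$ and $b$,
\begin{align}
\mathcal{I}_x(a,b) = \sum_{i=1}^{b} \frac{\mathbf{\Gamma}(a+i-1)}{\mathbf{\Gamma}(a)\mathbf{\Gamma}(i)}\, x^{a}(1-x)^{i-1}. \notag
\end{align}
Since $m$ and $n-r+1$ are non-negative integers under the hypotheses, I would specialize this expansion at $a=m$ and $b=n-r+1$, evaluated at the argument $p \triangleq \frac{1}{x+1}$, so that $1-p = \frac{x}{x+1}$. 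This substitution is the natural one because the right-hand side of \eqref{eq:IncompleteBeta} already contains $\mathcal{I}_{1/(x+1)}(m,n-r+1)$, and it matches the two geometric-type factors appearing on the left-hand side.

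The next step is purely a matter of rewriting the summand. First I would recognize the Gamma ratio as a binomial coefficient, $\frac{\mathbf{\Gamma}(m+i-1)}{\mathbf{\Gamma}(m)\mathbf{\Gamma}(i)} = \binom{m+i-2}{i-1}$, using $\mathbf{\Gamma}(k)=(k-1)!$ for positive integers $k$. Reindexing by $k=i-1$ sends the summation range to $k=0,\dots,n-r$ and turns the coefficient into $\binom{m+k-1}{k}$, while the power factors become $p^{m}(1-p)^{k} = (1+x)^{-m}\left(\frac{x}{x+1}\right)^{k}$. Collecting these yields
\begin{align}
\mathcal{I}_{\frac{1}{x+1}}(m,n-r+1) = (1+x)^{-m}\sum_{k=0}^{n-r}\binom{m+k-1}{k}\left(\frac{x}{x+1}\right)^{k}, \notag
\end{align}
and multiplying both sides by $(1+x)^{m}$ delivers \eqref{eq:IncompleteBeta} directly.

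The calculation is short, so there is no substantive obstacle; the only points demanding care are bookkeeping ones. I would verify that the upper summation limit $b=n-r+1$ in the expansion matches the stated range $0 \le k \le n-r$ after the shift $k=i-1$, and confirm that the integrality of $m$ (needed both for the factorial identity $\mathbf{\Gamma}(m)=(m-1)!$ and for the binomial coefficient to be meaningful) is guaranteed by the hypothesis that $m$ is a non-negative integer. As a final safeguard against a misplaced index or exponent, I would check the case $m=1$, where the left side collapses to the geometric sum $\sum_{k=0}^{n-r}\left(\frac{x}{x+1}\right)^{k}$ and the right side reduces to $(1+x)\bigl[1-\left(\frac{x}{x+1}\right)^{n-r+1}\bigr]$ via $\mathcal{I}_{1/(x+1)}(1,n-r+1)=1-\left(\frac{x}{x+1}\right)^{n-r+1}$; the two agree, confirming the bookkeeping.
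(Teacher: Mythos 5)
Your proof is correct, but it takes a genuinely different and more direct route than the paper's. The paper multiplies the left-hand side by $(1+x)^{n-r}$, expands it as a polynomial in $x$, and identifies the coefficient of $x^q$ via the Vandermonde-type convolution $\sum_{j=0}^{q}\binom{n-r-j}{q-j}\binom{m+j-1}{j}=\binom{n-r+m}{q}$ cited from Gould; it then expands the right-hand side using the binomial-sum form of the incomplete Beta cdf together with the reflection $\mathcal{I}_t(a,b)=1-\mathcal{I}_{1-t}(b,a)$, and matches the two polynomials term by term. You instead go straight to the series $\mathcal{I}_x(a,b)=\sum_{i=1}^{b}\frac{\mathbf{\Gamma}(a+i-1)}{\mathbf{\Gamma}(a)\mathbf{\Gamma}(i)}x^{a}(1-x)^{i-1}$ already quoted as Relation (b) in the proof of Proposition \ref{Pr:Incompletebetamonotonicity}, evaluated at $p=1/(x+1)$; after the shift $k=i-1$ that expansion \emph{is} the left-hand side up to the factor $(1+x)^{-m}$, so no coefficient matching and no external combinatorial identity are needed. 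Your route buys a one-line derivation that makes the identity's provenance transparent (it is the cdf series in disguise), while the paper's route is independent of that particular series form at the cost of invoking the Gould identity. One shared caveat, not specific to your argument: both proofs implicitly require $m\geq 1$ (so that $\mathbf{\Gamma}(m)=(m-1)!$ and the Beta shape parameter are legitimate) even though the statement allows $m=0$; this is harmless since the proposition is only applied with $m=U_c\geq 1$. Your $m=1$ sanity check is consistent and correctly confirms the indexing.
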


\vspace{2mm}
\begin{proof}
Multiplying the left hand side of \eqref{eq:IncompleteBeta} by $(1+x)^{n-r}$,
\begin{align}
\sum_{k=0}^{n-r} x^{k} (1+x)^{n-r-k} \binom{m+k-1}{k} =\sum_{k=0}^{n-r}\sum_{l=0}^{n-r-k} \binom{n-r-k}{l}\binom{m+k-1}{k} x^{k+l}.
\end{align}
The coefficient of $x^q$, where $0 \leq q \leq n-r$, is given as
$\sum_{j=0}^{q} \binom{n-r-j}{q-j}\binom{m+j-1}{j} = \binom{n-r+m}{q}$ (using the combinatorial identity \cite[Page 22, (3.2)]{Gould}). Consequently, we have
\begin{align}
\label{eq:ProofEq1}
\sum_{k=0}^{n-r} x^{k} (1+x)^{n-r-k} \binom{m+k-1}{k} = \sum_{k=0}^{n-r} \binom{n-r+m}{k} x^k.
\end{align}
Next, by definition of the incomplete Beta function and using $\mathcal{I}_t(a,b) = 1 - \mathcal{I}_{1-t}(b,a)$,
\begin{align}
\label{eq:ProofEq2}
(1+x)^{m}\mathcal{I}_{\frac{1}{x+1}}(m,n-r+1) \overset{(a)}= \sum_{j=m}^{n-r+m} \binom{n-r+m}{j} \frac{x^{n-r+m-j}}{(1+x)^{n-r}}
                                     \overset{(b)}= \sum_{k=0}^{n-r} \binom{n-r+m}{k} \frac{x^k}{(1+x)^{n-r}}
\end{align}
where step $(a)$ follows by definition, while step $(b)$ follows by replacing the index $j$ in step $(a)$ by $k = n-r+m-j$. Combining \eqref{eq:ProofEq1} and \eqref{eq:ProofEq2} gives the desired result.
\end{proof}

%
%

\bibliographystyle{IEEEtran}

\begin{table}[htp]
\caption{System Parameters}
\label{Tbl:SysPrms}
     \centering
     \begin{tabular}{c|  c | c}
     \hline
     \textbf{Variable} & \textbf{Parameter}  & \textbf{Sim. Value} \\ \hline
     $\Gamma$ & Minimum per-tier SIR Target for successful reception & $5$ dB \\
     $\overline{\gamma}_{\textrm{edge,dB}}$ & Average SNR of cell-edge user & $12$ dB \\
     $\epsilon$ & Maximum tolerable per-tier outage probability & $10 \%$ \\
     $R_c$ & Macrocell Radius  & $1000$ m \\
     $R_f$ & Femtocell Radius    &  $30$ m \\
     $T_c$ & Transmit Antennas at macrocell & $4$ antennas \\
     $T_f$ & Transmit Antennas at femtocell & $2$ antennas \\
     $P_{c}$ & Maximum Transmit Power at macrocell & $43$ dBm \\
     $P_{f}$ & Maximum Transmit Power at femtocell & $23$ dBm \\
     $P_{\textrm{UT}} $ & Maximum Transmit Power at User Terminal & $23$ dBm \\
     $P_{\textrm{dB}}$ &  Indoor to Outdoor Wall Partition Loss  & $5$ dB \\
     $f_c$ & Carrier Frequency & $2000$ MHz \\
     $\alpha_c$ & Outdoor path loss exponent  & $3.8$ \\
     $\alpha_{fo}$ & Indoor to Outdoor path loss exponent  & $3.8$ \\
     $\alpha_{fi}$ & Indoor path loss exponent  & $3$ \\
     \hline
\end{tabular}
\end{table}

\begin{figure} [htp]
\begin{center}
   \includegraphics[width=5.0in]{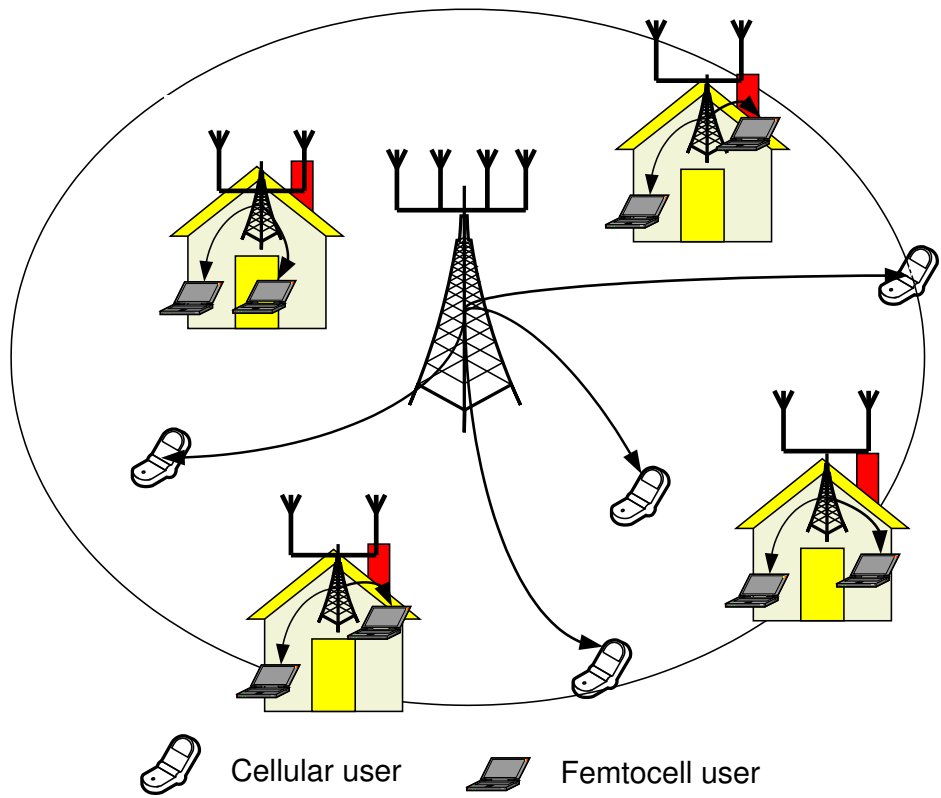}
   \caption{Multiuser multiple antenna transmission in a two-tier network.}
   \label{fig:TwoTierMUTransmission}
   \end{center}
\end{figure}

\begin{figure} [htp]
\begin{center}
   \includegraphics[width=5.0in]{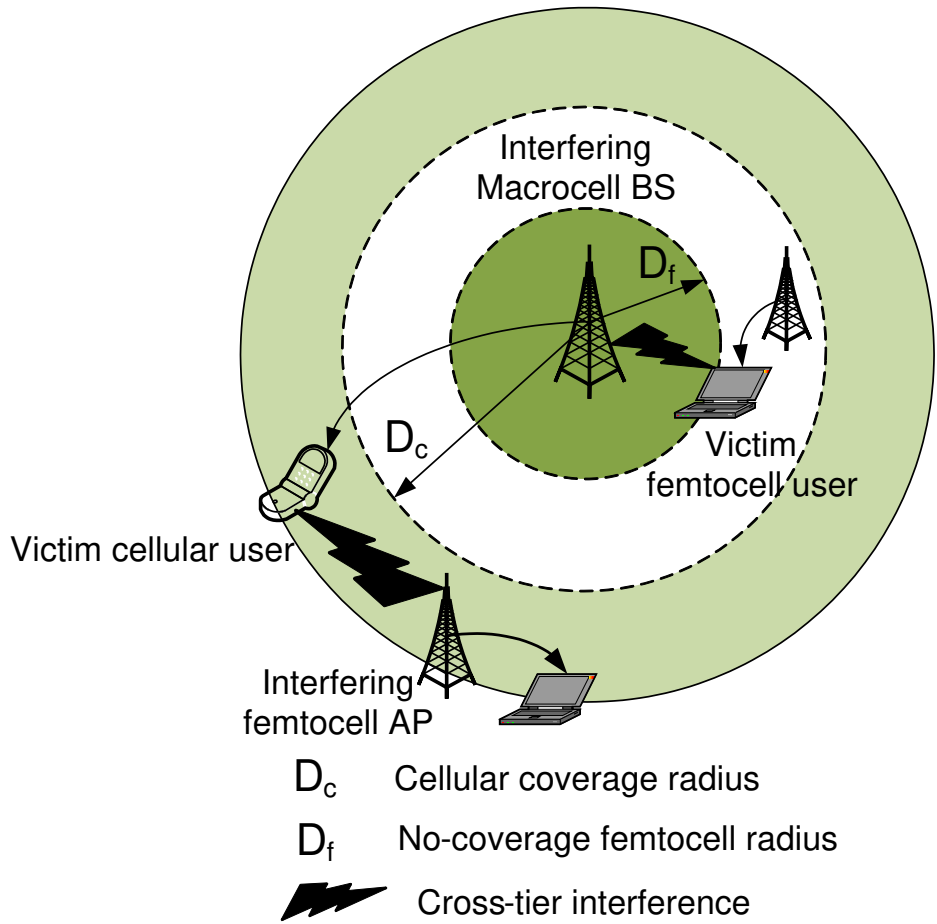}
   \caption{No-coverage femtocell radius and cellular coverage radius in a two-tier network with cochannel deployment.}
   \label{fig:FemtoMacro_CoverageRadii}
   \end{center}
\end{figure}

\begin{figure} [htp]
\begin{center}
   \includegraphics[width=5.0in]{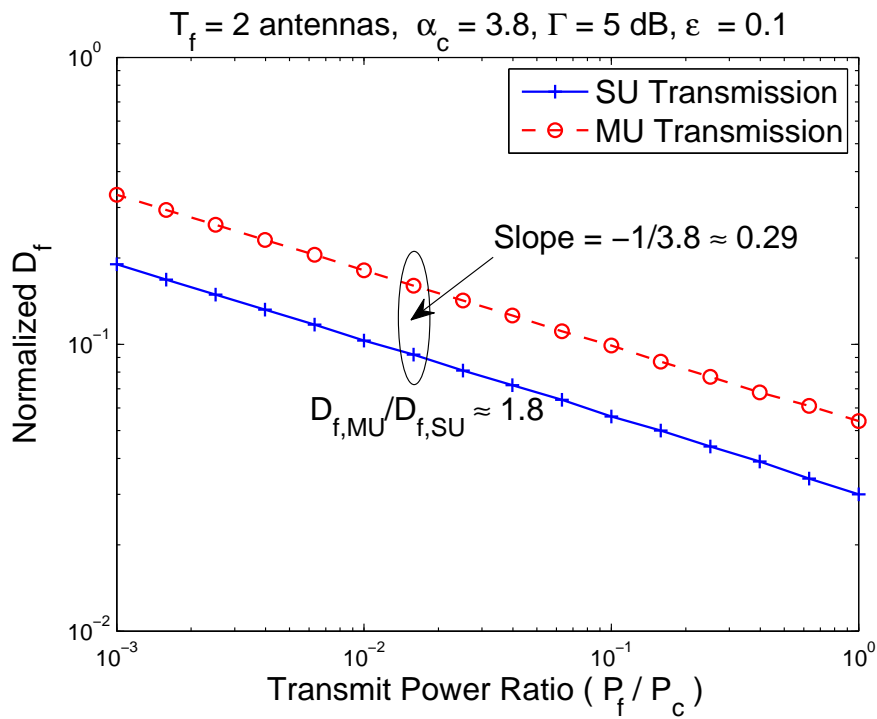}
   \caption{No-coverage femtocell radius for different values of $\frac{P_f}{P_c}$.}
   \label{fig:FemtoSDMA_vs_SUBF}
   \end{center}
\end{figure}

\begin{figure} [htp]
\begin{center}
   \includegraphics[width=5in]{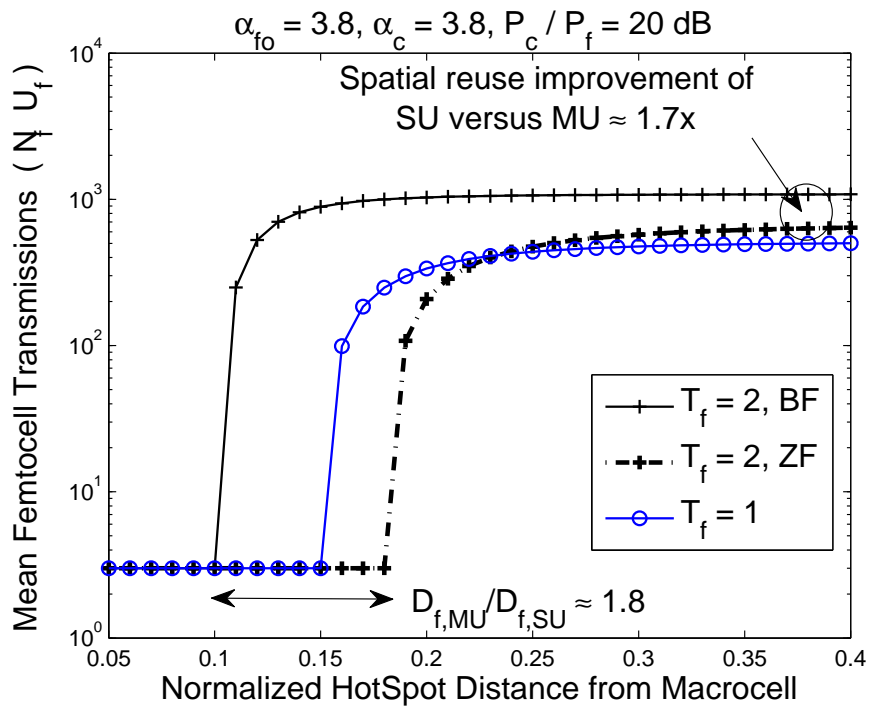}
   \caption{Maximum number of simultaneous femtocell transmissions $N_f U_f$ for different number of antennas and single-user versus multiple-user transmission per femtocell.}
   \label{fig:FemtoDensityVsDistance}
   \end{center}
\end{figure}

\begin{figure} [htp]
\begin{center}
   \includegraphics[width=5in]{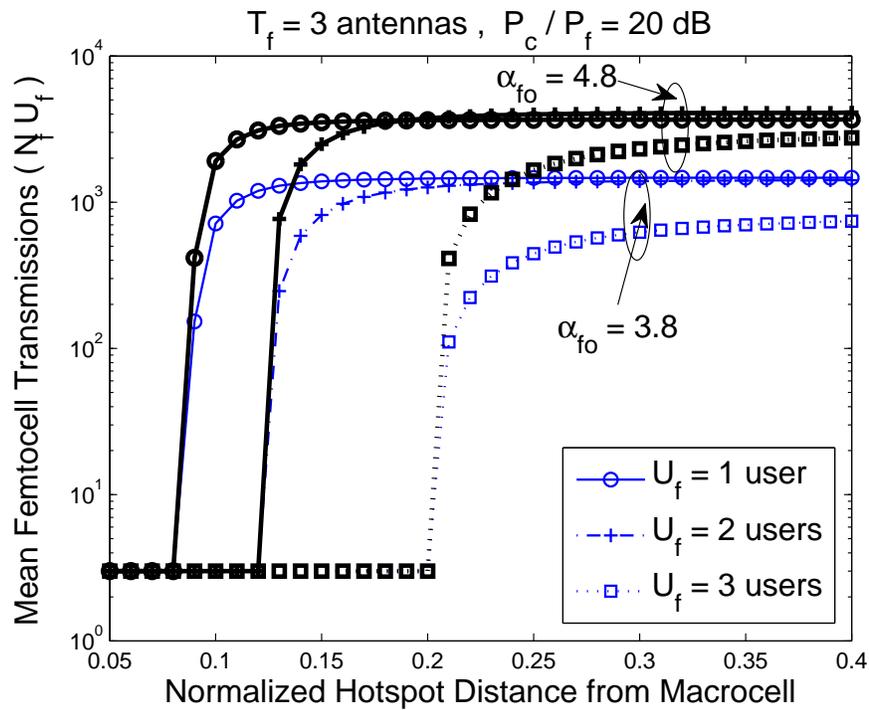}
   \caption{Maximum number of simultaneous femtocell transmissions $N_f U_f$ for different values of $\alpha_{fo}$.}
   \label{fig:FemtoDensityVsPathLossExponent}
   \end{center}
\end{figure}

\begin{figure} [htp]
\begin{center}
   \includegraphics[width=5.0in]{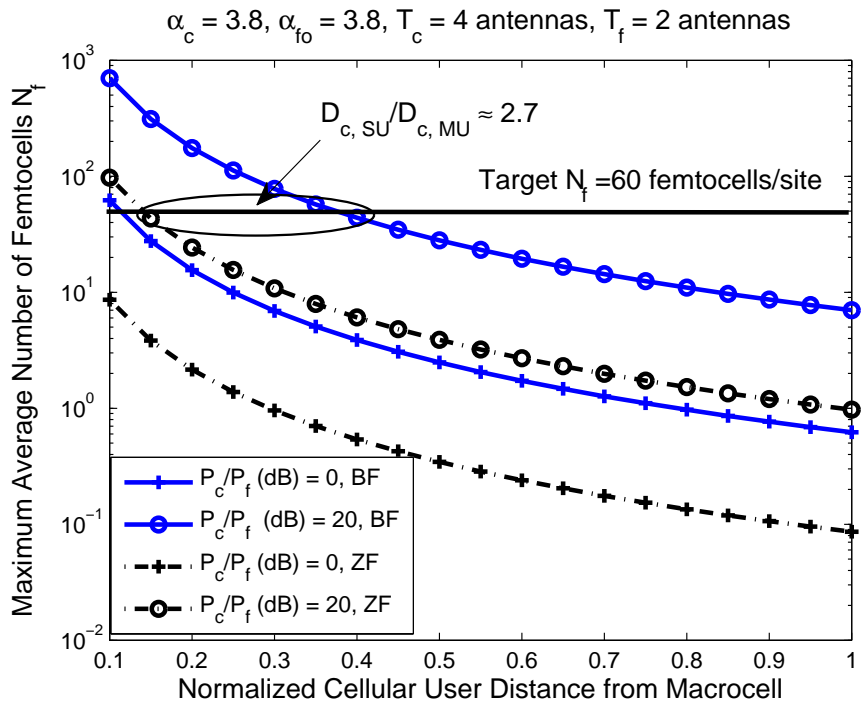}
   \caption{Maximum number of simultaneous femtocell transmissions satisfying outage probability constraint $\epsilon$ for a cellular user at different distances from the macrocell.}
   \label{fig:FemtocellTransmissions_CellularRef}
   \end{center}
\end{figure}

\begin{figure} [htp]
\begin{center}
   \includegraphics[width=5.0in]{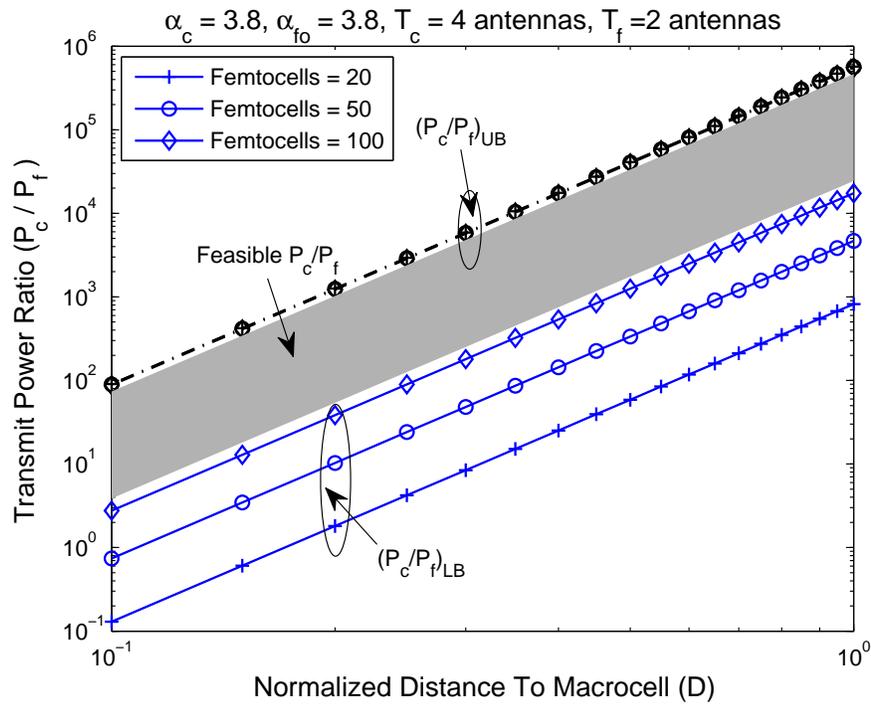}
   \caption{Lower and upper bounds on $P_c/P_f$ with SU transmission as a function of the distance $D$. Shaded region shows feasible $P_c/P_f$'s at location $D$, which satisfies the per-tier outage probability requirement for different average numbers of femtocells per cell-site.}
   \label{fig:TwoTier_PcoverPfBounds}
   \end{center}
\end{figure}

\begin{figure} [htp]
\begin{center}
   \includegraphics[width=5.0in]{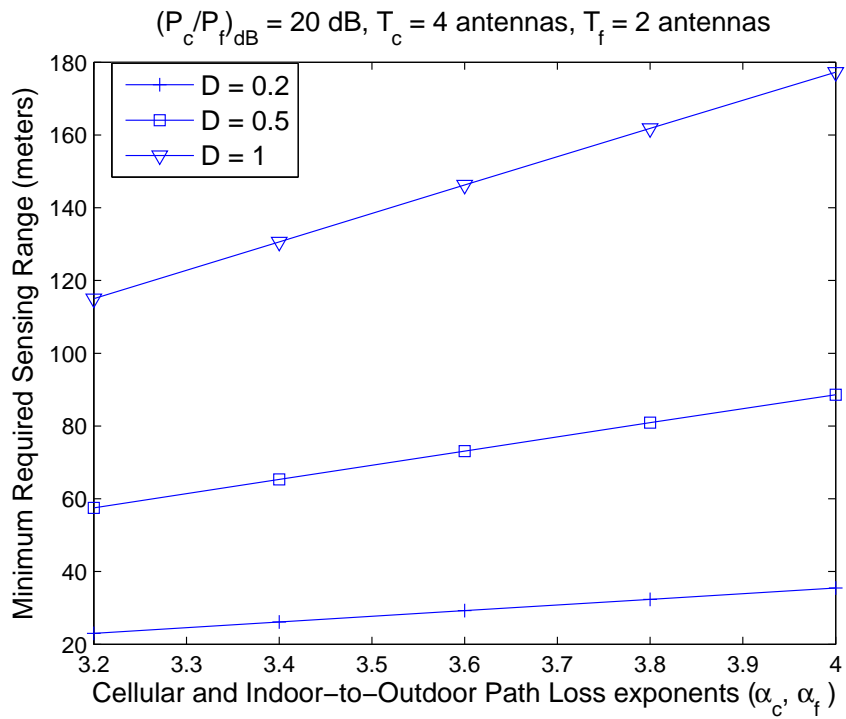}
   \caption{Necessary sensing range (in meters) per femtocell (assuming SU transmission in each tier) as a function of the path loss exponents $\alpha_c,\alpha_f$ and the normalized distance $D$ of the cellular user.}
   \label{fig:FemtoSensingRangeVsPLExponents}
   \end{center}
\end{figure}

\begin{figure} [htp]
\begin{center}
   \includegraphics[width=5.0in]{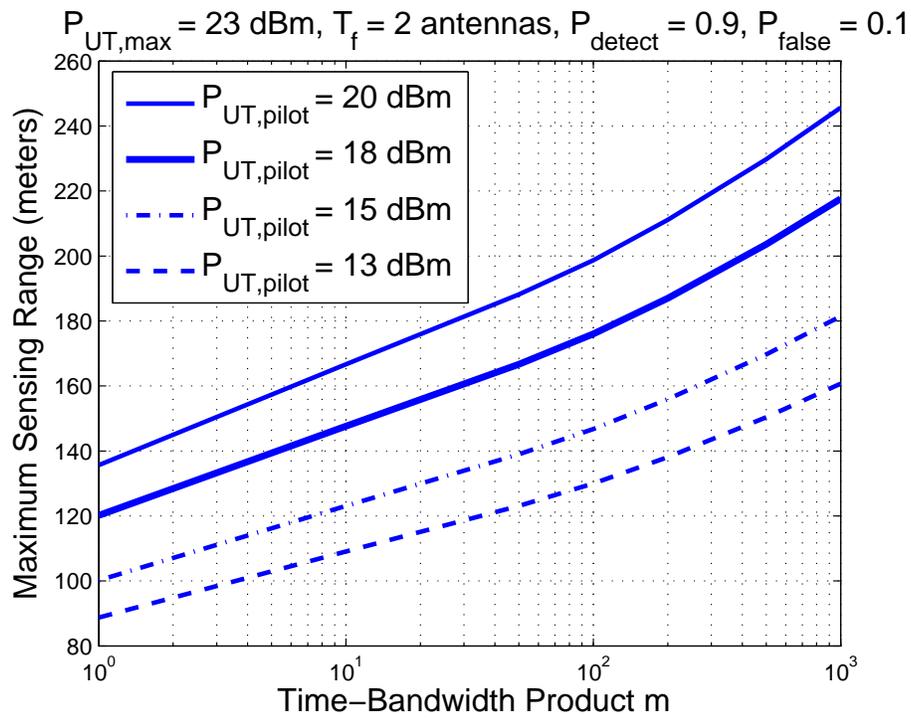}
   \caption{Maximum sensing range (in meters) at each femtocell as a function of the sensing time-bandwidth product and uplink pilot transmission powers.}
   \label{fig:FemtoSensingRangeVsTW}
   \end{center}
\end{figure}

\begin{figure} [htp]
\begin{center}
   \includegraphics[width=5.0in]{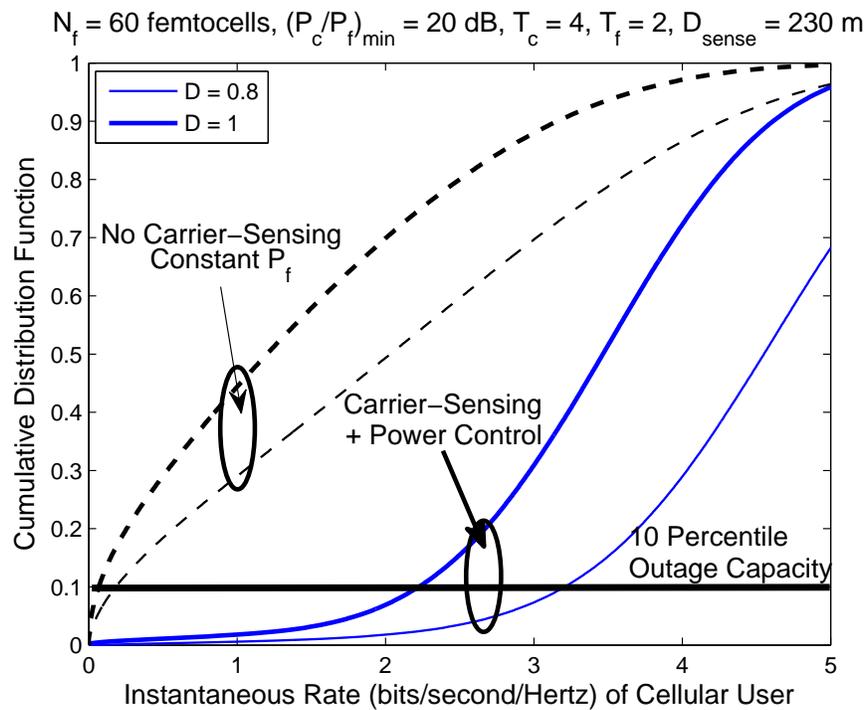}
   \caption{Cumulative distribution function of the instantaneous cellular data rate (in b/s/Hz) with SU transmission in each tier.}
   \label{fig:CellularRateCDFPlots}
   \end{center}
\end{figure}

\begin{figure} [htp]
\begin{center}
   \includegraphics[width=5.0in]{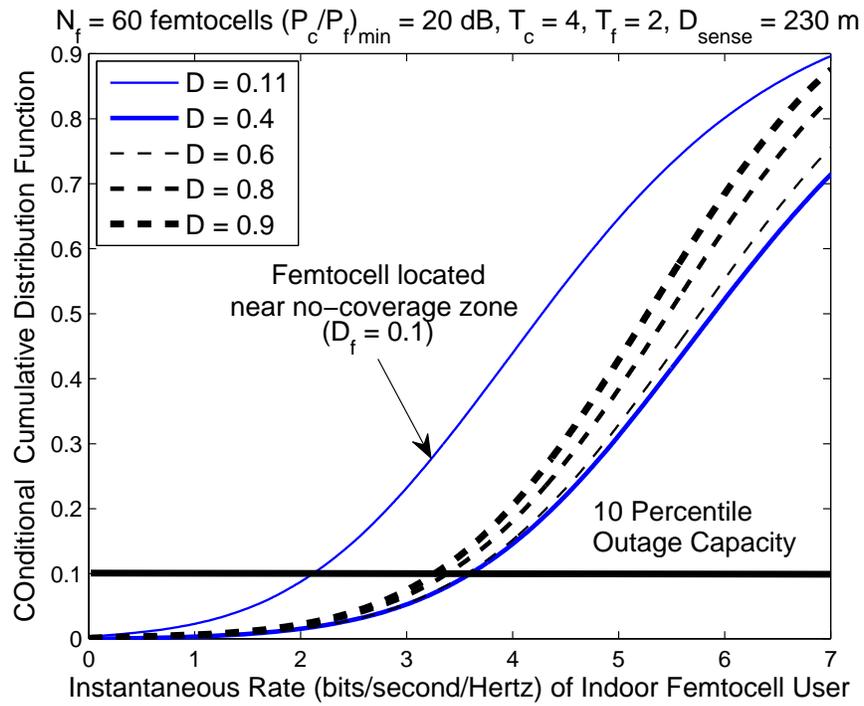}
   \caption{Conditional cumulative distribution function of a femtocell user's data rate (in b/s/Hz) (conditioned on a carrier-sensed cellular user at distance $D_{\textrm{sense}}/2$). Single-user transmission is employed in each tier.}
   \label{fig:FemtocellRateCDFPlots}
   \end{center}
\end{figure}

\end{document}